\newtheorem{theorem}{Theorem}[section]
\newtheorem{corollary}[theorem]{Corollary}
\newtheorem{proposition}[theorem]{Proposition}
\newtheorem{construction}{Construction}
\newtheorem*{construction*}{Construction}
\theoremstyle{definition}
\newtheorem{assumption}[theorem]{Assumption}
\newtheorem{definition}[theorem]{Definition}
\newtheorem{example}[theorem]{Example}
\newtheorem{remark}[theorem]{Remark}
\newcommand{\C}{\mathcal{C}}
\newcommand{\F}{\mathbb{F}}
\tikzstyle{b} = [draw, thick, black, -]
\tikzstyle{d} = [draw, thick,black, dashed]
\title{Hierarchical Quasi-cyclic Codes from Reed-Solomon and Polynomial Evaluation Codes}
\author{Emily McMillon$^1$}
\address{$^1$Department of Mathematics, Rice University, Houston, TX 77005}
\email{em72@rice.edu}
\thanks{The first author was supported by the National Science Foundation grant DMS-2303380.}
\author{Kathryn Haymaker$^2$}
\address{$^2$Department of Mathematics, Villanova University, 
Villanova, PA 19085}
\email{kathryn.haymaker@villanova.edu}
\subjclass[2010]{94B05, 94B25, 94B27}
\keywords{quasi-cyclic codes, Reed-Solomon codes, polynomial evaluation codes, parity-check codes, LDPC codes, MDPC codes.}
\begin{document}

\begin{abstract} 
    We introduce the first example of algebraically constructed hierarchical quasi-cyclic codes. These codes are built from Reed-Solomon codes using a 1964 construction of superimposed codes by Kautz and Singleton. We show both the number of levels in the hierarchy and the index of these Reed-Solomon derived codes are determined by the field size. We show that this property also holds for certain additional classes of polynomial evaluation codes.
    
    We provide explicit code parameters and properties as well as some additional bounds on parameters such as rank and distance. In particular, starting with Reed-Solomon codes of dimension $k=2$ yields hierarchical quasi-cyclic codes with Tanner graphs of girth $6$.

    We present a table of small code parameters and note that some of these codes meet the best known minimum distance for binary codes, with the additional hierarchical quasi-cyclic structure. We draw connections to similar constructions in the literature, but importantly, while existing literature on related codes is largely simulation-based, we present a novel algebraic approach to determining new bounds on parameters of these codes.
\end{abstract}

\maketitle

\section{Introduction} \label{sec:intro}

Parity-check codes are linear codes specified by a parity-check matrix, which plays a central role in determining code parameters and often forms the basis for a decoding strategy. Notable families include low-density parity-check (LDPC) codes, moderate-density parity-check (MDPC) codes, and codes defined via sparse Tanner graph representations. LDPC codes and their iterative decoding algorithm were introduced by Gallager \cite{gallager1960}. Modern research in MDPC codes is inspired by their role in post-quantum cryptography schemes based on the McEliece cryptosystem \cite{misoczki2013mdpc}.

Here, we explore an algebraic construction of parity-check codes motivated by a study of moderate-density and low-density parity-check codes from disjunct matrices \cite{haymaker2025parity}. This construction is inspired by Kautz and Singleton's classical work on superimposed codes \cite{kautz1964nonrandom}. The Kautz and Singleton approach  has seen renewed interest recently (e.g. \cite{inan2019optimality, essential20}), including a flurry of new work related to group testing for COVID-19 \cite{shental2020efficient, lau2021construction,haymaker2025pooled, dinesen2025group}.

Kautz and Singleton used $q$-ary codes to create binary superimposed codes, and since there are many ways to derive binary codes from nonbinary, we will delineate the differences between our main construction and other methods. One notable related construction is for LDPC codes from 2-dimensional MDS codes \cite{chen2010two}, a special case of which was adopted as the standard for 10 Gigabit Ethernet over copper (10GBASE-T) \cite{lim2012novel}. Related LDPC codes from 2-dimensional MDS codes are presented in \cite{djurdjevic2003class, xu2006construction}. A variation called mirror-paradigm Reed-Solomon based quasi-cyclic LDPC codes was proposed for nonvolatile memory channels in \cite{lim2012novel}. More recently, a correspondence that sends rank-metric codes to Hamming codes via a similar length $q$ unit-vector mapping was studied in \cite{astore2024geometric}. 

There are different methods of mapping non-binary Reed-Solomon (or other) codes into binary. The \textit{binary image} of a code over $\mathbb{F}_{2^m}$ replaces each $\mathbb{F}_{2^m}$ symbol with a binary $m$-tuple (see \cite{macwilliams1977theory}). For example, the binary expanded Reed-Solomon codes studied in \cite{bidan2007some} replace each field element from $\mathbb{F}_{2^m}$ with a binary $m$-tuple. In contrast, the mapping in this paper sends symbols from an arbitrary field $\mathbb{F}_q$ to binary $q$-tuples of weight one. The resulting codes are quasi-cyclic, with index  equal to the characteristic of the underlying field, and have further hierarchical quasi-cyclic properties that we prove.

Quasi-cyclic parity-check codes are among the most widely studied classes of structured linear codes due to their simple and practical implementations (e.g. \cite{karimi2013girth, wang2013hierarchical, myung2005quasi, li2018algebra}). A code is quasi-cyclic if it is invariant under shifts by some integer $t$, called the index. Quasi-cyclic LDPC and MDPC codes have proven to be successful and produce codes with predictable rank behavior and favorable Tanner graph properties while maintaining efficient implementations \cite{fossorier2004quasicyclic}. These advantages have led to the use of quasi-cyclic LDPC codes in a variety of modern communication and storage standards. The rank of QC-LDPC codes has been studied in \cite{yang2018rank, smarandache2022using} and references therein.

Hierarchical quasi-cyclic (HQC) codes were introduced in \cite{wang2013hierarchical} as a generalization of quasi-cyclic codes. In these codes, the matrix structure is globally quasi-cyclic, and the smaller pieces of the code are also quasi-cyclic, potentially with multiple such levels. This also generalizes the idea of taking the pre-lifted protograph approach, as in e.g. \cite{mitchell2014quasi, li2025high}, to an arbitrary number of quasi-cyclic graph lifts. From a practical perspective, HQC codes present a way to compactly represent more complicated, but still structured, matrices. From a theoretical perspective, the presence of multiple quasi-cyclic layers introduces parameters\textemdash such as the number of levels and associated indices\textemdash that can be leveraged to influence code parameters.

In \cite{haymaker2025parity}, it was shown that by taking a Reed-Solomon code of length $n$ and dimension $2$, the parity-check codes obtained via the construction considered in this paper are free of $4$-cycles, which is advantageous for iterative decoding. However, specific parameters such as exact dimension and minimum distance were left as open questions. 

In this work, we show that hierarchical quasi-cyclic structure can arise naturally from an algebraic construction based on Reed–Solomon codes and other classes of polynomial evaluation codes via the Kautz–Singleton superimposed code construction. To our knowledge, this is the first explicit algebraic construction yielding hierarchical quasi-cyclic codes. This means that the code parameters are not imposed externally, but are intrinsic to the algebraic construction, and enables analysis of their parameters using algebraic tools. A few of the results were first presented in \cite{HM25}, some without proofs. This version has significantly more background and motivation and also contains new results on the hierarchical quasi-cyclic structure of parity-check codes from Reed-Solomon and certain polynomial evaluation codes, using the main construction. 

The paper is structured as follows: Section~\ref{sec:prelim} introduces standard coding theory definitions as well as needed background on hierarchical quasi-cyclic codes and the key constructions. The parameters and an algebraic analysis of our main construction are given in Section~\ref{sec:parameters} as well as a characterization of these codes within the hierarchical quasi-cyclic framework. Results on the relationship between the main construction and existing codes, as well as the impact of permutation and monomial equivalence in the starting $q$-ary codes on the resulting binary codes, are proven in Section~\ref{sec:equivalence}. Section~\ref{sec:table} gives a table of code parameters for small field sizes. Section~\ref{sec:conclusion} concludes the paper.

\section{Preliminaries} 
\label{sec:prelim}

We begin with some general notation. Throughout, $[N]$ denotes the set $\{1, 2, \dotsc, N\}$. The finite field with $q$ elements is denoted $\mathbb{F}_q$ ($q$ is always a power of a prime). Vectors are denoted in boldface, i.e. $\mathbf{x}$. The support of a binary string $\mathbf{x}$ is the set of indices $i$ where $x_i=1$.  Finally, define $[\mathbf{e}_i]_n$ to be the $i$th standard basis column vector of length $n$, and define $[\mathbf{u}_i]_n$ to be the $i$th standard basis row vector of length $n$. If the length of the vectors is clear from context, we use shorthand $\mathbf{e}_i$ and $\mathbf{u}_i$, respectively. 

\subsection{Coding theory fundamentals}

An $[n,k]_q$ \textit{linear error-correcting code} is a $k$-dimensional subspace of the $n$-dimensional vector space $\mathbb{F}_q^n$. This subspace can be defined as the kernel of an $m \times n$ matrix $H$ with row rank $n-k$, called a \textit{parity-check matrix} of the code. That is, given a parity-check matrix $H$, the code $C(H)$ is given by
\[ C(H) = \{ \mathbf{x} \in \mathbb{F}_q^n \mid H \mathbf{x}^T = \mathbf{0}\}. \]

Codes can also be defined using a \textit{generator matrix} $G$ such that $G H^T = 0$. In this case, the code $C(G)$ is given by the row space of $G$, or 
\[ C(G) = \{ \mathbf{c} \in \F_q^n \mid \mathbf{c} = \mathbf{x} G, \mathbf{x} \in \F_q^k \}.\]

The \textit{minimum distance} of a linear code $C$ is $d_{\text{min}}(C) = \min \limits_{\mathbf{c} \in C} \text{wt}_H(\mathbf{c})$ where $\text{wt}_H(\mathbf{c})$ is the \textit{Hamming weight}, or number of nonzero coordinates, of $\mathbf{c}$. If the minimum distance of a linear code $C$ is known, it is called an $[n,k,d]_q$ code. A well-known and often useful fact is that the minimum distance of a code $C(H)$ is the smallest size of a collection of linearly dependent columns of $H$.

\begin{definition} Let $\mathbb{F}_q$ be a finite field with elements labeled $\alpha_1, \ldots, \alpha_q$. A \textit{Reed-Solomon (RS) code} $C$ over $\mathbb{F}_q$ with parameters $[n, k, n-k+1]_q$, $n\leq q$,  is 
\[ C:= \{\mathbf{c}_p \mid p(x)\in \mathbb{F}_q[x], \deg(p(x))<k\}, \] 
where the codeword $\textbf{c}_p$ from the polynomial $p(x)$ is  
\[ \textbf{c}_p:=(p(\alpha_1), \ldots, p(\alpha_n)).\]
When $n=q$, the code is called an \textit{extended narrow-sense Reed-Solomon code}.
\end{definition}

We remark that Reed-Solomon codes are \textit{maximum distance separable} (MDS) since $d=n-k+1$, the maximum possible distance for fixed $n$ and $k$.

\begin{definition}
    The \textit{Reed-Muller code} $RM(q, m, \rho)$ is the code of evaluations over $\mathbb{F}_q^m$ of $q$-ary polynomials in $m$ variables of total degree at most $\rho$ and individual degree at most $q-1$. Let $\mathcal{P}=\mathbb{F}_q[X_1, X_2, \ldots, X_m]$, and $\mathcal{P}_{\leq \rho}$ be the set of polynomials of total degree at most $\rho$.  That is, 
    \[RM(q, m, \rho)= \{(f(\alpha_1), \ldots, f(\alpha_{q^m}))|f\in \mathcal{P}_{\leq \rho},  \mathbb{F}_q^m=\{\alpha_1, \ldots, \alpha_{q^m}\}\}. \] 
\end{definition}

Throughout, we will order the elements of $\mathbb{F}_{p^r}$ and the polynomials in both $\mathbb{F}_{p^r}[x]$ and $\mathbb{F}_{p^r}[X_1, \ldots, X_m]$ using $p$-ary lexicographic ordering. 

\begin{definition} \label{def:pary}
    Let $\F_q$ with $q = p^r$ be a finite field of characteristic $p$ and let $\alpha$ be a primitive element of $\F_q$. Then any element of $\F_q$ can be written as $a_0 + a_1 \alpha + a_2 \alpha^2 + \dots + a_{r-1}\alpha^{r-1}$ for some collection of $a_i \in \F_p$. We will induce a \textit{$p$-ary lexicographic ordering} on these elements where $0 < 1 < \dots < p-1$ and $\alpha^0 < \alpha^1 < \dots < \alpha^{r-1}$. Similarly, for elements in $\F_q[x]$, order the coefficients of the polynomials in the $p$-ary lexicographic ordering and $x^0 < x^1< x^2 < \cdots$. For elements in $\mathbb{F}_{q}[X_1, \ldots, X_m]$,  order the coefficients of the polynomials in the $p$-ary lexicographic ordering, then use graded lexicographic ordering on the polynomials, first comparing total degree, then using lexicographic ordering on the monomial terms. 
\end{definition}

We next define a property of both Reed-Solomon and Reed-Muller codes, which is also enjoyed by certain subcodes of these polynomial evaluation codes. 

\begin{definition} \label{def:fieldpartition}
    Suppose $\mathbb{F}_{p^r}$ is a field with elements $0, 1,2, \ldots,p-1, \alpha, \ldots, (p-1)(\alpha^{r-1}+\alpha^{r-2}+\cdots +1) $ in $p$-ary lexicographic order. A polynomial evaluation code $C$ over $\mathbb{F}_{p^r}$ has the \textit{field partition property} if the set of polynomials defining $C$ can be partitioned into $\frac{|C|}{p^r}$ parts, each with representative $q_i(x)$ with zero constant term, where each part consists of the $p^r$ constant shifts of $q_i$: $\{q_i, q_i+1, q_i+2,\ldots, q_i+(p-1)(\alpha^{r-1}+\alpha^{r-2}+\cdots +1)\}$.
\end{definition}

\subsection{Quasi-cyclic and hierarchical quasi-cyclic codes}

We now define quasi-cyclic codes. Let $\mathbf{x} = (x_1, \dotsc, x_n)$ be an arbitrary $n$-tuple. Let $\mathbf{x}^i$ be the cyclic shift of $\mathbf{x}$ $i$ positions to the right, i.e., $\mathbf{x}^i = (x_{n-i+1}, x_{n-i}, \dotsc, x_{n}, x_1, \dotsc, x_{n-i})$.

\begin{definition}
    A length $n$ \textit{quasi-cyclic code} is a linear code for which, under some representation of the code, cyclically shifting a codeword a fixed number $L > 1$ (or a multiple of $L$) of symbol positions either to the right or to the left results in another codeword. The integer $L$ is called the \textit{shifting constraint}. The integer $\frac{n}{L}=t$ is called the \textit{index}. When $L=1$, the code is called \textit{cyclic}.
\end{definition}

Given shifting constraint $L$ and positive integers $J, t$ with $J < t$, let $H_i \in \F_q^{J \times L}$ for $i \in [t]$. The nullspace of the matrix $H \in \F_q^{tJ \times t L}$ in Equation~\ref{eq:blockcirculant}, which is said to be in \textit{block circulant form}, is a quasi-cyclic parity-check code. 

\begin{equation}
   H = \begin{bmatrix} H_1 & H_2 & \dotsm & H_t \\
                       H_t & H_1 & \dotsm & H_{t-1} \\
                       \vdots & \vdots & \ddots & \vdots \\
                       H_2 & H_3 & \dotsm & H_1 \end{bmatrix}
                       \label{eq:blockcirculant}
\end{equation}

\begin{equation}
   \hat{H} = \begin{bmatrix} C_{1,1} & C_{1,2} & \dotsm & C_{1,L} \\
                         C_{2,1} & C_{2,2} & \dotsm & C_{2,L} \\
                         \vdots & \vdots & \ddots & \vdots \\
                         C_{J,1} & C_{J,2} & \dotsm & C_{J,L} \end{bmatrix}
                         \label{eq:circulant}
\end{equation}

The code from such a matrix $H$ is permutation equivalent to a code derived from a matrix $\hat{H}$ consisting of circulant matrices. A \textit{circulant matrix} is square matrix such that each successive row is the row above it shifted cyclically one position to the right. In this case, $\hat{H}$ is a $J \times L$ block matrix composed of $t \times t$ circulants, as in Equation~\ref{eq:circulant}.

Throughout, we will denote the $t \times t$ identity matrix which has been cyclically shifted $r$ positions to the left as $I_{t,r}$. Any circulant matrix can be represented as some linear combination of such shifted identity matrices.

In general, the parity-check matrix of a quasi-cyclic code, as given in Equation~\ref{eq:circulant}, can be represented by a \textit{polynomial parity-check matrix} $H(x)$, with entries that are polynomials:
\begin{equation}
    H(x) = \begin{bmatrix} h_{1,1}(x) & h_{1,2}(x) & \dots & h_{1,L}(x) \\
    h_{2,1}(x) & h_{2,2}(x) & \dots & h_{2,L}(x) \\
    \vdots & \vdots & \ddots & \vdots \\
    h_{J,1}(x) & h_{J,2}(x) & \dots & h_{J,L}(x)\end{bmatrix}. \label{eq:polyh}
\end{equation}

There is a natural connection between the representation in Equation~\ref{eq:polyh} and the circulant representation in Equation~\ref{eq:circulant}. We let the first row of each circulant submatrix $C_{j,\ell}$ correspond to the vector representation of the coefficients of $h_{j,\ell}(x)$. A common other interpretation (when the codes are binary and assuming $t$ is well understood) is to think of $x^r$ as representing $I_{t,r}$. The polynomial interpretation is thus a compact way of representing sums of circulant matrices.

Quasi-cyclic LDPC codes are often characterized by the weight of the highest weight circulant, or equivalently, the largest entry in the code's weight matrix. In general, if the highest weight circulant has weight $T$, a code is called a Type-$T$ quasi-cyclic code. The \textit{weight matrix} $W$ of a quasi-cyclic parity-check matrix is the number of nonzero entries in the first row of each circulant block, or, equivalently, the number of nonzero terms in each polynomial entry of $H(x)$. An example of these terms follows.

\begin{example} \label{ex:polynomial-circulants}
    Let $C$ be a binary quasi-cyclic parity-check code with the following parity-check matrix.
    \begin{equation} \label{eq:qc-matrix}
    {H = \left[
        \begin{array}{cccc|cccc|cccc}
            1 & 0 & 1 & 0 & 0 & 0 & 0 & 1 & 0 & 0 & 0 & 0 \\
            0 & 1 & 0 & 1 & 1 & 0 & 0 & 0 & 0 & 0 & 0 & 0 \\
            1 & 0 & 1 & 0 & 0 & 1 & 0 & 0 & 0 & 0 & 0 & 0 \\
            0 & 1 & 0 & 1 & 0 & 0 & 1 & 0 & 0 & 0 & 0 & 0 \\
            \hline
            1 & 0 & 0 & 0 & 0 & 1 & 0 & 0 & 0 & 0 & 1 & 0 \\
            0 & 1 & 0 & 0 & 0 & 0 & 1 & 0 & 0 & 0 & 0 & 1 \\
            0 & 0 & 1 & 0 & 0 & 0 & 0 & 1 & 1 & 0 & 0 & 0 \\
            0 & 0 & 0 & 1 & 1 & 0 & 0 & 0 & 0 & 1 & 0 & 0 \\
        \end{array} \right]_{8 \times 12}}
    \end{equation}
    Here, the index is $t=4$, we have $J=2$, $L=3$, and the polynomial version of the parity-check matrix is given by
    \begin{equation} \label{eq:poly-matrix}
    H(x) = \begin{bmatrix} 1+x^2 & x & 0 \\ 1 & x^3 & x^2
    \end{bmatrix}.
    \end{equation}
    The weight matrix of this parity-check matrix is
    \begin{equation} \label{eq:wtmatrix}
        W = \begin{bmatrix} 2 & 1 & 0 \\ 1 & 1 & 1 \end{bmatrix}.
    \end{equation}
    and so this code is type-2.
\end{example}

The weight matrix can be used as the biadjacency matrix of a graph, called a \textit{Tanner graph}. In this way, the rows correspond to check nodes (denoted with squares) and the columns correspond to variable nodes (denoted with circles). The $w_{j,\ell}$ entry corresponds to the number of edges between check node $c_j$ and variable node $v_{\ell}$. Such a graph is usually called a \textit{protograph}. The protograph corresponding to the matrix in Equation~\ref{eq:wtmatrix} is given in Figure~\ref{fig:basegraph}.

\begin{figure}[ht]
    \centering
    \begin{tikzpicture}[square/.style={regular polygon,regular polygon sides=4}]
        \node[circle,draw=black,fill=white] (v1) at (0,0) {};
        \node[circle,draw=black,fill=white] (v2) at (2,0) {};
        \node[circle,draw=black,fill=white] (v3) at (4,0) {};
        \node[square,draw=black,fill=white] (c1) at (1,2) {};
        \node[square,draw=black,fill=white] (c2) at (3,2) {};

        \draw[black,thick] (v1) to [out=75,in=-125] (c1);
        \draw[black,thick] (v1) to [out=55,in=-100] (c1);
        \draw[black,thick] (v1) to (c2);
        \draw[black,thick] (v2) to (c1);
        \draw[black,thick] (v2) to (c2);
        \draw[black,thick] (v3) to (c2);
    \end{tikzpicture}
    \caption{The base Tanner graph corresponding to the weight matrix in Equation~\ref{eq:wtmatrix}.} \label{fig:basegraph}
\end{figure}
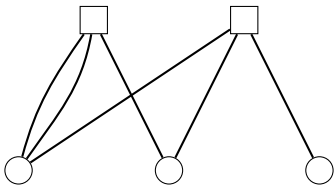

The base graph is used to create quasi-cyclic codes via a \textit{graph lifting} operation. Each vertex in the base graph is replaced with $t$ vertices in the lifted graph. In this case $t$ is called the \textit{lifting degree} of the graph lift. Edges are maintained between the copies, but may be permuted. When these permutations are cyclic, a quasi-cyclic parity-check matrix is obtained. The Tanner graph in Figure~\ref{fig:liftedgraph} has been obtained by starting with the weight matrix in Equation~\ref{eq:wtmatrix} and lifting the edges according the polynomial parity-check matrix in Equation~\ref{eq:poly-matrix} or, equivalently, the parity-check matrix in Equation~\ref{eq:qc-matrix}.

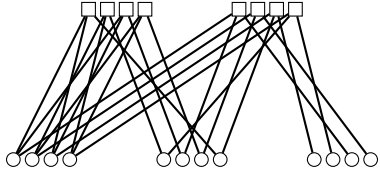
\begin{figure}[ht]
    \centering
    \begin{tikzpicture}[square/.style={regular polygon,regular polygon sides=4}]
        \node[circle,draw=black,fill=white,scale=0.5] (v11) at (0,0) {};
        \node[circle,draw=black,fill=white,scale=0.5] (v12) at (0.25,0) {};
        \node[circle,draw=black,fill=white,scale=0.5] (v13) at (0.5,0) {};
        \node[circle,draw=black,fill=white,scale=0.5] (v14) at (0.75,0) {};
        \node[circle,draw=black,fill=white,scale=0.5] (v21) at (2,0) {};
        \node[circle,draw=black,fill=white,scale=0.5] (v22) at (2.25,0) {};
        \node[circle,draw=black,fill=white,scale=0.5] (v23) at (2.5,0) {};
        \node[circle,draw=black,fill=white,scale=0.5] (v24) at (2.75,0) {};
        \node[circle,draw=black,fill=white,scale=0.5] (v31) at (4,0) {};
        \node[circle,draw=black,fill=white,scale=0.5] (v32) at (4.25,0) {};
        \node[circle,draw=black,fill=white,scale=0.5] (v33) at (4.5,0) {};
        \node[circle,draw=black,fill=white,scale=0.5] (v34) at (4.75,0) {};
        \node[square,draw=black,fill=white,scale=0.5] (c11) at (1,2) {};
        \node[square,draw=black,fill=white,scale=0.5] (c12) at (1.25,2) {};
        \node[square,draw=black,fill=white,scale=0.5] (c13) at (1.5,2) {};
        \node[square,draw=black,fill=white,scale=0.5] (c14) at (1.75,2) {};
        \node[square,draw=black,fill=white,scale=0.5] (c21) at (3,2) {};
        \node[square,draw=black,fill=white,scale=0.5] (c22) at (3.25,2) {};
        \node[square,draw=black,fill=white,scale=0.5] (c23) at (3.5,2) {};
        \node[square,draw=black,fill=white,scale=0.5] (c24) at (3.75,2) {};

        \draw[black,thick] (v11) to (c11);
        \draw[black,thick] (v12) to (c12);
        \draw[black,thick] (v13) to (c13);
        \draw[black,thick] (v14) to (c14);
        \draw[black,thick] (v11) to (c13);
        \draw[black,thick] (v12) to (c14);
        \draw[black,thick] (v13) to (c11);
        \draw[black,thick] (v14) to (c12);
        \draw[black,thick] (v11) to (c21);
        \draw[black,thick] (v12) to (c22);
        \draw[black,thick] (v13) to (c23);
        \draw[black,thick] (v14) to (c24);
        \draw[black,thick] (v21) to (c12);
        \draw[black,thick] (v22) to (c13);
        \draw[black,thick] (v23) to (c14);
        \draw[black,thick] (v24) to (c11);
        \draw[black,thick] (v21) to (c24);
        \draw[black,thick] (v22) to (c21);
        \draw[black,thick] (v23) to (c22);
        \draw[black,thick] (v24) to (c23);
        \draw[black,thick] (v31) to (c23);
        \draw[black,thick] (v32) to (c24);
        \draw[black,thick] (v33) to (c21);
        \draw[black,thick] (v34) to (c22);
    \end{tikzpicture}
    \caption{The quasi-cyclic lift of the graph in Figure~\ref{fig:basegraph} according to the parity-check matrices given in Example~\ref{ex:polynomial-circulants}.} \label{fig:liftedgraph}
\end{figure}

In \cite{wang2013hierarchical}, the notion of a quasi-cyclic code using an arbitrary number of nested graph liftings was introduced. Codes that can be represented this way are called \textit{hierarchical quasi-cyclic} or HQC.

\begin{definition}[\cite{wang2013hierarchical}] \label{def:hqc}
    A \textit{hierarchical quasi-cyclic LDPC code} with $K$ levels is defined by a $J_K \times L_K$ multivariate polynomial parity-check matrix $H(x_1, \dotsc, x_K)$ in $K$ variables. The $(j,\ell)$ entry of $H$, with $j \in [J_K]$, $\ell \in [L_K]$, is given by a $K$-variate polynomial $h_{j,\ell}(x_1, \dotsc, x_K)$. The value $t_i$ for $i \in [K]$ is the $i$th lifting degree, so for all terms in the matrix, the maximum degree on $x_i$ is $t_i-1$. Define $c_{s_1,\dotsc, s_K}^{j,\ell}$ to be the coefficient associated with the term $x_1^{s_1} x_2^{s_2} \dotsm x_K^{s_K}$. With these notation and definitions, the code is defined by the collection of $J_K L_K$ polynomials:
    \begin{equation} \label{eq:hqc-polynomials}
        h_{j,\ell}(x_1, \dotsc, x_K) = \sum_{s_K=0}^{t_K-1} \dots \sum_{s_1=0}^{t_1-1} c_{s_1, \dotsc, s_K}^{j,\ell} \left( \prod_{k=1}^K x_k^{s_k} \right).
    \end{equation}
    The parity-check matrix of this code is obtained by replacing each entry of $H(x_1, \dotsc, x_K)$ with the submatrix 
    \begin{equation} \label{eq:hqc-pcmatrix}
        \sum_{s_K=0}^{t_K-1} \dotsm \sum_{s_1=0}^{t_1-1} c_{s_1, \dotsc, s_K}^{j,\ell} \left( I_{t_K,s_K} \otimes \dotsm \otimes I_{t_1,s_1} \right)
    \end{equation}
    where $\otimes$ is the Kronecker product.
\end{definition}

Note that the Kronecker product is not commutative. Normally, the ordering of variables in a multivariate polynomial $h(x_1, \dotsc, x_K)$ over $\F_q$ is not consequential. However, in this setting, changing the order of the variables in Equation~\ref{eq:hqc-polynomials} changes the resulting matrix representation given by the Kronecker product in Equation~\ref{eq:hqc-pcmatrix}. As a result, in this setting, we will always write the variables in any term of $h(x_1, \dotsc, x_K)$ in decreasing order in order to match with the ordering in Equation~\ref{eq:hqc-pcmatrix}.

We believe that the definition is best illustrated with examples. We give two.

\begin{example} \label{ex:hqc1}
    The HQC code in this example will have $K=2$, $t_1=3$, $t_2=4$, $J_1 = 4$, $J_2 = 1$, $L_1 = 8$, and $L_2=2$ and will begin with the weight matrix $W=\begin{bmatrix}1 & 1 \end{bmatrix}$. The binary parity-check matrix is given as follows:
        \begin{equation}
    H = \left[
        \begin{array}{ccc|ccc|ccc|ccc||ccc|ccc|ccc|ccc}
            0 & 0 & 0 & 1 & 0 & 0 & 0 & 0 & 0 & 0 & 0 & 0 & 0 & 0 & 0 & 0 & 0 & 0 & 0 & 0 & 0 & 0 & 1 & 0 \\
            0 & 0 & 0 & 0 & 1 & 0 & 0 & 0 & 0 & 0 & 0 & 0 & 0 & 0 & 0 & 0 & 0 & 0 & 0 & 0 & 0 & 0 & 0 & 1 \\
            0 & 0 & 0 & 0 & 0 & 1 & 0 & 0 & 0 & 0 & 0 & 0 & 0 & 0 & 0 & 0 & 0 & 0 & 0 & 0 & 0 & 1 & 0 & 0 \\ \hline
            0 & 0 & 0 & 0 & 0 & 0 & 1 & 0 & 0 & 0 & 0 & 0 & 0 & 1 & 0 & 0 & 0 & 0 & 0 & 0 & 0 & 0 & 0 & 0 \\
            0 & 0 & 0 & 0 & 0 & 0 & 0 & 1 & 0 & 0 & 0 & 0 & 0 & 0 & 1 & 0 & 0 & 0 & 0 & 0 & 0 & 0 & 0 & 0 \\
            0 & 0 & 0 & 0 & 0 & 0 & 0 & 0 & 1 & 0 & 0 & 0 & 1 & 0 & 0 & 0 & 0 & 0 & 0 & 0 & 0 & 0 & 0 & 0 \\ \hline
            0 & 0 & 0 & 0 & 0 & 0 & 0 & 0 & 0 & 1 & 0 & 0 & 0 & 0 & 0 & 0 & 1 & 0 & 0 & 0 & 0 & 0 & 0 & 0 \\
            0 & 0 & 0 & 0 & 0 & 0 & 0 & 0 & 0 & 0 & 1 & 0 & 0 & 0 & 0 & 0 & 0 & 1 & 0 & 0 & 0 & 0 & 0 & 0 \\
            0 & 0 & 0 & 0 & 0 & 0 & 0 & 0 & 0 & 0 & 0 & 1 & 0 & 0 & 0 & 1 & 0 & 0 & 0 & 0 & 0 & 0 & 0 & 0 \\ \hline
            1 & 0 & 0 & 0 & 0 & 0 & 0 & 0 & 0 & 0 & 0 & 0 & 0 & 0 & 0 & 0 & 0 & 0 & 0 & 1 & 0 & 0 & 0 & 0 \\
            0 & 1 & 0 & 0 & 0 & 0 & 0 & 0 & 0 & 0 & 0 & 0 & 0 & 0 & 0 & 0 & 0 & 0 & 0 & 0 & 1 & 0 & 0 & 0 \\
            0 & 0 & 1 & 0 & 0 & 0 & 0 & 0 & 0 & 0 & 0 & 0 & 0 & 0 & 0 & 0 & 0 & 0 & 1 & 0 & 0 & 0 & 0 & 0
            \end{array} \right]_{12 \times 24}
    \end{equation}
    Notice that the leftmost 12 columns of $H$ correspond to the $12 \times 12$ identity matrix shifted $9$ positions left, or $I_{12,9}$. However, the rightmost $12$ columns are clearly not a shifted identity matrix. But the matrix as a whole is quasi-cyclic with index $3$, and zooming out to the $4 \times 4$ blocks, we can see that the smaller $3 \times3$ matrices appear in circulant pattern. In the leftmost $12$ columns, these $3 \times 3$ circulants have the form $I_{3,0}$. In the rightmost $12$ columns, they have the form $I_{3,2}$. This leads to the following single variable representation:
    \begin{equation}
        H(x_1) = \left[
            \begin{array}{cccc|cccc}
                0 & 1 & 0 & 0 & 0 & 0 & 0 & x_1^2 \\
                0 & 0 & 1 & 0 & x_1^2 & 0 & 0 & 0 \\
                0 & 0 & 0 & 1 & 0 & x_1^2 & 0 & 0 \\
                1 & 0 & 0 & 0 & 0 & 0 & x_1^2 & 0 
            \end{array}
        \right],
    \end{equation}
    using the fact that $1 = x_1^0$ corresponds to $I_{3,0}$. Now, the first four columns of $H(x_1)$ have the form $I_{4,3}$ and last four columns have the form $x_1^2 I_{4,1}$. This motivates using a second variable to further simplify the representation.
    \begin{equation}
        H(x_1,x_2) = \left[ \begin{array}{cc} x_2^3 & x_2 x_1^2 \end{array}\right]
    \end{equation}
    Finally, we note that in this setting, $x_2^3$ corresponds to the Kronecker product $I_{4,3} \otimes I_{3,0}$ and $x_2 x_1^2$ corresponds to $I_{4,1} \otimes I_{3,2}$.
    \end{example}

The framework of HQC parity-check codes will appear in the characterization given in Theorem~\ref{thm:hqc-codes}. It will turn out that the codes constructed in this paper will have monomial entries in their polynomial parity-check representations, so Example~\ref{ex:hqc1} will be the most relevant to this work. However, we include the next, slightly more complicated, example with some weight $2$ entries to provide the reader a fuller understanding of the HQC framework.

\begin{example} \label{ex:hqc2}
    The HQC code in this example will have $K=3$, $t_1=7$, $t_2=3$, $t_3=2$, $J_1=6$, $L_1=12$, $J_2=2$, $L_2=4$, $J_3=1$, and $L_3=2$.

    Note that because $t_1=7$, $J_1=6$, and $L_1=12$, $H$ will be a $J_1 t_1 \times L_1 t_1 = 42 \times 84$ matrix, which is prohibitively large to show completely. Therefore we start first with the matrix $H(x_1)$ and recursively build up to $H(x_1,x_2,x_3)$.
    {\scriptsize
    \begin{equation}
        H(x_1) = \left[
        \begin{array}{ccc|ccc||ccc|ccc}
            0 & 1+x_1 & 0 & 1 & x_1^3 & 0 & 0 & 0 & 0 & x_1^3 & x_1^6 & 1+x_1^5 \\
            0 & 0 & 1+x_1 & 0 & 1 & x_1^3 & 0 & 0 & 0 & 1+x_1^5 & x_1^3 & x_1^6 \\
            1+x_1 & 0 & 0 & x_1^3 & 0 & 1 & 0 & 0 & 0 & x_1^6 & 1 +x_1^5 & x_1^3 \\
            \hline
            1 & x_1^3 & 0 & 0 & 1+x_1 & 0 & x_1^3 & x_1^6 & 1+x^5 & 0 & 0 & 0 \\
            0 & 1 & x_1^3 & 0 & 0 & 1+x_1 & 1+x_1^5 & x_1^3 & x_1^6 & 0 & 0 & 0 \\
            x_1^3 & 0 & 1 & 1+x_1 & 0 & 0 & x_1^6 & 1+x_1^5 & x_1^3 & 0 & 0 & 0 
        \end{array} \right]
    \end{equation}}
    {\small
    \begin{equation}
        {H(x_1,x_2) = \left[
        \begin{array}{cc|cc}
            (1+x_1)x_2^2 & 1+x_1^3 x_2^2 & 0 & x_1^3 + (1+x_1^5)x_2 + x_1^6 x_2^2 \\
            1 + x_1^3 x_2^2 & (1+x_1)x_2^2 & x_1^3 + (1+x_1^5) x_2 + x_1^6 x_2^2 & 0 
        \end{array} \right]}
    \end{equation}
    \begin{equation}
        {H(x_1,x_2,x_3) = \left[
        \begin{array}{cc}
            (1+x_1)x_2^2 + (1+x_1^3x_2^2)x_3 & (x_1^3 + (1+x_1^5)x_2 + x_1^6 x_2^2) x_3
        \end{array} \right]}
    \end{equation}}
    Let's take, for example, the polynomial $h_{1,1}(x_1,x_2,x_3)$ of $H(x_1,x_2,x_3)$. The term $(1+x_1)x_2^2$ is in the identity position in $H(x_1,x_2)$, and so it is not multiplied by any power of $x_3$. The term $(1+x_1^3x_2^2)$ is shifted left by one position in $H(x_1,x_2)$, and so is multiplied by $x_3$ in $h_{1,1}(x_1,x_2,x_3)$. 

    Now take, for example, the polynomial $h_{2,2}(x_1,x_2)$. In $H(x_1)$, in the second block row and second block column, the term $1+x_1$ is shifted two positions to the left, and so in $h_{2,2}(x_1,x_2)$, this term must be multiplied by $x_2^2$. Other terms in other polynomials can be derived similarly.
\end{example}

\subsection{The main construction}
Disjunct matrices, their applications, and their relationship to superimposed codes are detailed in \cite{du2000combinatorial}. Their application for MDPC and LDPC code constructions is discussed in \cite{haymaker2025parity}.  
\begin{definition} \label{def:disjunct}
    A matrix $M$ is \textit{$D$-disjunct} if, for any $D+1$ columns of $M$ with one column designated, there is a row with a $1$ in the designated column and a $0$ in each of the other $D$ columns.
\end{definition}

We now describe the Kautz-Singleton construction to create binary superimposed (later termed `qdisjunct'') codes from $q$-ary codes.

\begin{construction*}[Kautz, Singleton, \cite{kautz1964nonrandom}] \label{construction:KS}
Let $C$ be an $[n, k, d]_q$ code with codewords listed as column vectors $\textbf{c}_1,\ldots, \textbf{c}_{q^k}$. Order the elements of $\mathbb{F}_q$: $\alpha_1, \alpha_2, \ldots, \alpha_q$. A binary superimposed code $S(C)$ with $q^k$ codewords of length $nq$ 
is formed by replacing each symbol $\alpha_i$ with the column vector $\mathbf{e_i}$ of length $q$.
\end{construction*}

Next we present the main construction of binary parity-check codes using a matrix formed from the codewords of the Kautz-Singleton construction. 

\begin{construction}[McMillon, Haymaker \cite{haymaker2025parity}]\label{construction:MH}
    Suppose $q=p^r$ for a prime $p$ and $\F_q$ is the field with $q$ elements. Let $C$ be an $[n, k, d]_q$ code. Order the elements of $\mathbb{F}_q$: $\beta_1, \beta_2, \ldots, \beta_q$ by $p$-ary expansion.  List the codewords of $C$ in $p$-ary lexicographic order as the columns of an $n\times q^k$ $q$-ary matrix $M_C$. Then form a binary matrix $H_C$ by 
    replacing each symbol $\beta_i$ with the column vector $\mathbf{e}_i$, for each element in the matrix $M_C$ and each $i \in [q]$. Throughout, we will denote the code obtained from this construction by $\mathcal{C}(H_C)$, and its parameters by $[\tilde{n}, \tilde{k}, \tilde{d}]$.
\end{construction}

The disjunct value of a matrix from Construction~\ref{construction:MH} is tied to the  parameters of $C$ (see e.g. \cite{kautz1964nonrandom, du2000combinatorial}). In particular, for an $[n,k,d]_q$ code $C$ under Construction~\ref{construction:MH}, the disjunct value of $H_{C}$ is $D = \lfloor \frac{n-1}{n-d} \rfloor$. 

We summarize previous results on the structure of codes of the type $C(M)$ for $M$ $D$-disjunct and $\mathcal{C}(H_C)$ as in Construction \ref{construction:MH} in Propositions~\ref{prop:mindis} and \ref{prop:properties}. 

\begin{proposition}[\cite{haymaker2025parity}] \label{prop:mindis}
    Let $M$ be a $D$-disjunct matrix. Then the code $C(M)$ has minimum distance $d_{\text{min}}(C) \geq D+2$.
\end{proposition}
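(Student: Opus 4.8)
The plan is to reduce the claim to a statement about linear independence of columns and then exploit the isolating-row property guaranteed by disjunctness. Recall the fact noted in the preliminaries: the minimum distance of a code $C(M)$ equals the smallest size of a collection of linearly dependent columns of $M$. Since a disjunct matrix is a $0$-$1$ matrix, all the linear algebra takes place over $\mathbb{F}_2$, where a set of columns is linearly dependent precisely when some nonempty subset of them sums to $\mathbf{0}$. Thus it suffices to show that no nonempty set of at most $D+1$ columns of $M$ sums to $\mathbf{0}$; this forces the smallest linearly dependent set to have size at least $D+2$, which gives $d_{\text{min}}(C(M)) \geq D+2$.

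First I would assume, toward a contradiction, that there is a nonempty set $T$ of columns with $|T| \leq D+1$ and $\sum_{j \in T} \mathbf{v}_j = \mathbf{0}$, where $\mathbf{v}_j$ denotes the $j$th column of $M$. I would designate one column $j_0 \in T$; the remaining columns $T \setminus \{j_0\}$ then number $|T| - 1 \leq D$. The operative consequence of $D$-disjunctness that I would invoke is that a designated column can be isolated, in some row, from any collection of at most $D$ other columns: when $|T| - 1 = D$ this is Definition~\ref{def:disjunct} verbatim, and when $|T| - 1 < D$ it follows by enlarging $T \setminus \{j_0\}$ with arbitrary further columns of $M$ to a set of exactly $D$ competitors before applying the definition.

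Applying this property yields a row index $i$ with $M_{i,j_0} = 1$ and $M_{i,j} = 0$ for every $j \in T \setminus \{j_0\}$. Restricting the assumed relation $\sum_{j \in T} \mathbf{v}_j = \mathbf{0}$ to coordinate $i$ gives $M_{i,j_0} + \sum_{j \in T \setminus \{j_0\}} M_{i,j} = 1 + 0 = 1$ over $\mathbb{F}_2$, contradicting that this coordinate of the sum is $0$. Hence no such $T$ exists, so every nonempty collection of at most $D+1$ columns is linearly independent, and the conclusion follows.

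I expect the only genuine subtlety, rather than a true obstacle, to be the bookkeeping that links the combinatorial disjunctness condition to the algebraic minimum-distance criterion: one must phrase linear dependence over $\mathbb{F}_2$ as a vanishing subset sum and handle subsets strictly smaller than $D+1$ via the padding step described above. Once the isolating row is in hand the contradiction is immediate, since that row detects the contribution of the designated column and of nothing else in the subset.
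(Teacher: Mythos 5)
Your proof is correct. The paper does not actually prove this proposition---it is stated as a result imported from \cite{haymaker2025parity}---but your argument is precisely the standard one for this fact: disjunctness furnishes a row isolating any designated column from the at most $D$ remaining columns (padding smaller sets up to exactly $D$ competitors), so no nonempty set of at most $D+1$ columns of $M$ can sum to $\mathbf{0}$ over $\mathbb{F}_2$, and hence the smallest linearly dependent collection of columns, which equals $d_{\text{min}}(C(M))$, has size at least $D+2$.
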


\begin{proposition}[see \cite{haymaker2025parity}] \label{prop:properties}
Let $C$ be an $[n,k,d]_q$ linear code.
\begin{enumerate}
    \item $H_C$ is an $nq\times q^k$ binary matrix. 

    \item If the code $C$ has no positions fixed at $0$, $\mathcal{C}(H_C)$ has an $(n, q^{k-1})$-regular Tanner graph representation, i.e. with variable nodes all of degree $n$ and check nodes all of degree $q^{k-1}$.\footnote{Otherwise, a code with $c_i=0$ for all $c\in C$ would have an all-1 row and $q-1$ all-0 rows in $H(C)$, corresponding to position $i$. }

    \item When $C$ is MDS with $k=2$, the Tanner graph of $H_C$ has girth 6. 

    \item For all codes $C$ with $d<n-1$ (and in particular for Reed-Solomon codes with $k \geq 3$), the Tanner graph of $H_C$ has girth $4$.
    \item For any code $C$, the density of ones in the matrix $H_C$ is $\frac{1}{q}$.
    \item If $C$ has no positions fixed at $0$,  the row weight of $H_C$ is $q^{k-1}$, a fraction of $\frac{1}{q}$ ones per row.
\end{enumerate}

\end{proposition}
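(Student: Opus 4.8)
The plan is to route all six parts through one dictionary between $C$ and $H_C$: each column of $H_C$ encodes a codeword $\mathbf{c}$, each row is indexed by a pair $(j,\beta_i)$ recording ``coordinate $j$ equals $\beta_i$,'' and the corresponding entry is $1$ exactly when $c_j=\beta_i$. Parts (1) and (5) then follow by direct enumeration: $M_C$ is $n\times q^k$, and replacing each entry by a length-$q$ weight-one column $\mathbf{e}_i$ multiplies the row count by $q$ and preserves binariness, giving an $nq\times q^k$ binary matrix. Because every symbol (including $0$) maps to a weight-one vector, each column of $H_C$ has exactly $n$ ones, so the fraction of ones is $nq^k/(nq\cdot q^k)=1/q$, with no hypothesis on $C$ needed.

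For parts (2) and (6) I would use the coordinate projections $\pi_j\colon C\to\F_q$, $\mathbf{c}\mapsto c_j$, which are $\F_q$-linear. The variable-node degree is the column weight $n$ just computed. The degree of the check node $(j,\beta_i)$ is $|\pi_j^{-1}(\beta_i)|$, and the hypothesis that no coordinate is fixed at $0$ says exactly that each $\pi_j$ is nonzero; being a nonzero linear map to the one-dimensional space $\F_q$, it is surjective, so every fiber has size $q^{k-1}$. Hence each row has weight $q^{k-1}$, proving (6), and the Tanner graph is $(n,q^{k-1})$-regular, proving (2). Dropping the hypothesis produces the degenerate all-ones and all-zeros rows noted in the footnote, which breaks regularity.

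Parts (3) and (4) rest on the observation that two distinct columns $\mathbf{c}\neq\mathbf{c}'$ share a one in row $(j,\beta_i)$ iff $c_j=c'_j=\beta_i$, so the number of shared ones equals the number of coordinates where $\mathbf{c}$ and $\mathbf{c}'$ agree. A $4$-cycle therefore exists iff some two distinct codewords agree in at least two coordinates, equivalently iff some nonzero codeword has weight at most $n-2$, equivalently iff $d\le n-2$. This gives (4) at once: when $d<n-1$ one may take a weight-$d$ codeword and $\mathbf{0}$, which agree in $n-d\ge 2$ coordinates, so the girth is $4$; in particular RS codes with $k\ge 3$ satisfy $d=n-k+1\le n-2$. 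For the MDS $k=2$ case of (3), $d=n-1$ forces every pair of distinct codewords to agree in at most one coordinate, so no $4$-cycle exists and the girth is at least $6$.

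The remaining and most delicate step is to exhibit an actual $6$-cycle for (3), certifying girth exactly $6$. Such a cycle corresponds to three distinct codewords $\mathbf{c}_1,\mathbf{c}_2,\mathbf{c}_3$ and three distinct coordinates $j_1,j_2,j_3$ with $\mathbf{c}_1,\mathbf{c}_2$ agreeing at $j_1$, $\mathbf{c}_2,\mathbf{c}_3$ at $j_2$, and $\mathbf{c}_3,\mathbf{c}_1$ at $j_3$. Assuming $n\ge 3$, I would puncture $C$ to the three coordinates $j_1,j_2,j_3$, obtaining a $[3,2]$ MDS code whose words satisfy a single relation $h_1y_1+h_2y_2+h_3y_3=0$ with all $h_i\neq 0$. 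Imposing the three prescribed agreements leaves enough free parameters to solve this system while keeping the three words pairwise distinct; here one uses that any two coordinates form an information set for a $k=2$ MDS code, so distinctness follows once the words differ in the appropriate free coordinate. This produces the $6$-cycle and completes (3). I expect this construction, rather than any of the counting arguments, to be the crux.
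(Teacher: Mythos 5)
Your proof is correct, but there is no in-paper argument to compare it with: the paper states Proposition~\ref{prop:properties} only as a summary of prior work, citing \cite{haymaker2025parity} without proof, so your write-up has to stand on its own --- and it does. The dictionary you set up (columns $\leftrightarrow$ codewords, rows $\leftrightarrow$ pairs $(j,\beta_i)$, shared ones $\leftrightarrow$ coordinate agreements) is exactly the right reduction: parts (1) and (5) are the counting you describe, and for (2) and (6) the observation that ``no position fixed at $0$'' makes each coordinate functional $\pi_j$ a nonzero, hence surjective, linear map with fibers of size $q^{k-1}$ is clean and complete. The cycle analysis is also right: a $4$-cycle exists iff two distinct codewords agree in at least two positions, iff (by linearity) $d\le n-2$, which settles (4) and the girth-$\ge 6$ half of (3). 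Your $6$-cycle construction goes through as sketched: puncturing to three coordinates gives a $[3,2,2]$ MDS code whose parity check $h_1y_1+h_2y_2+h_3y_3=0$ has all $h_i\neq 0$, and one can solve explicitly, e.g.\ $y^{(1)}=(0,0,0)$, $y^{(2)}=(0,a,-h_2a/h_3)$, $y^{(3)}=(-h_2a/h_1,a,0)$ for any $a\neq 0$, which are pairwise distinct and realize the three prescribed agreements; since any two coordinates form an information set, the projection of $C$ onto three coordinates is injective, so this lifts to three distinct codewords of $C$ and hence a genuine $6$-cycle in the Tanner graph of $H_C$. One point worth emphasizing: your hypothesis $n\ge 3$ in part (3) is not cosmetic. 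The trivial $[2,2,1]_q$ code is MDS with $k=2$, yet its Tanner graph has girth $8$: with only two positions, any three check nodes include two of the form $(j,\beta)$, $(j,\beta')$ with $\beta\neq\beta'$, which cannot share a variable node, so no $6$-cycle exists (consistent with the paper's own computation that this code has $\tilde d=4$, whose support yields an $8$-cycle). So the proposition as stated silently assumes $n\ge 3$, and your proof is the one that makes this visible.
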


Many of the codes we will use for Construction~\ref{construction:MH} in this paper will be Reed-Solomon codes. In full generality, however, no such constraint is required. We now give a small example to illustrate the construction.

\begin{example} \label{ex:construction}
We start with a $[3, 2, 2]$ Reed-Solomon code $C$ over $\mathbb{F}_4=\{0, 1, \alpha, \alpha+1\}$ with generator matrix 
\[ G = \begin{bmatrix}
    1 & 1 & 1 \\ 
    0 & 1 & \alpha 
\end{bmatrix}. \]
This code contains 16 codewords. Using the ordering $\alpha_1=0, \alpha_2=1, \alpha_3=\alpha, \alpha_4=\alpha+1$, we obtain the matrix $H_C$. The codewords $\mathbf{c}_i$ correspond to columns in $H_C$ as follows:
\[ 0, 1, \alpha, \alpha^2, x, x+1, x+\alpha, x+(\alpha+1), \alpha x, \ldots, (\alpha+1)x+\alpha, (\alpha+1)x+(\alpha+1).\]

For example, column 1 of the matrix $M_C$ is obtained from the evaluation of the polynomial $p_1(x)=0$ at the field elements $0, 1, \alpha$.
$H_C$ is obtained via the process where 
$0$ is mapped to $\mathbf{e}_1$, $1$ to $\mathbf{e}_2$, $\alpha$ to $\mathbf{e}_3$, and $\alpha+1$ to $\mathbf{e}_4$. The resulting matrix $H_C$ is given below.
\begin{equation} \label{eq:matrix}
    {H_C = \left[
        \begin{array}{cc|cc||cc|cc||cc|cc||cc|cc}
            1 & 0 & 0 & 0 & 1 & 0 & 0 & 0 & 1 & 0 & 0 & 0 & 1 & 0 & 0 & 0 \\
            0 & 1 & 0 & 0 & 0 & 1 & 0 & 0 & 0 & 1 & 0 & 0 & 0 & 1 & 0 & 0 \\ \hline
            0 & 0 & 1 & 0 & 0 & 0 & 1 & 0 & 0 & 0 & 1 & 0 & 0 & 0 & 1 & 0 \\
            0 & 0 & 0 & 1 & 0 & 0 & 0 & 1 & 0 & 0 & 0 & 1 & 0 & 0 & 0 & 1 \\
            \hline \hline
            1 & 0 & 0 & 0 & 0 & 1 & 0 & 0 & 0 & 0 & 1 & 0 & 0 & 0 & 0 & 1\\
            0 & 1 & 0 & 0 & 1 & 0 & 0 & 0 & 0 & 0 & 0 & 1 & 0 & 0 & 1 & 0 \\ \hline
            0 & 0 & 1 & 0 & 0 & 0 & 0 & 1 & 1 & 0 & 0 & 0 & 0 & 1 & 0 & 0 \\
            0 & 0 & 0 & 1 & 0 & 0 & 1 & 0 & 0 & 1 & 0 & 0 & 1 & 0 & 0 & 0 \\
            \hline \hline
            1 & 0 & 0 & 0 & 0 & 0 & 1 & 0 & 0 & 0 & 0 & 1 & 0 & 1 & 0 & 0 \\
            0 & 1 & 0 & 0 & 0 & 0 & 0 & 1 & 0 & 0 & 1 & 0 & 1 & 0 & 0 & 0 \\ \hline 
            0 & 0 & 1 & 0 & 1 & 0 & 0 & 0 & 0 & 1 & 0 & 0 & 0 & 0 & 0 & 1 \\
            0 & 0 & 0 & 1 & 0 & 1 & 0 & 0 & 1 & 0 & 0 & 0 & 0 & 0 & 1 & 0 
        \end{array} \right]_{12 \times 16}}
\end{equation}

When viewing $H_C$ as the parity-check matrix of a binary code $\mathcal{C}(H_C)$, the code has parameters $[16,8,4]$.

The matrix $H_C$ has $2\times 2$ circulant sub-blocks, which can be represented with a two-level hierarchical quasi-cyclic representation with $K=2$, $t_1=t_2=2$, $J_1=6, L_1=8, J_2=3, L_2=4$,
as 
\[ H_C(x_1, x_2) = \begin{bmatrix} 1 & 1& 1& 1\\ 
1& x_1& x_2 & x_2x_1\\
1& x_2 & x_2x_1 & x_1 \end{bmatrix}. \]
\end{example}

\subsection{Existing MDS-based parity-check codes}

The following construction \cite{chen2010two} was shown to unify many existing LDPC code constructions based on finite fields and MDS codes of dimension two, e.g., constructions in \cite{lan2007construction, song2009unified, lin2006algebraic, lan2005constructions}. These constructions differ from Construction \ref{construction:MH} in several ways: they restrict to cases when $k=2$,  the resulting code parameters differ, and the quasi-cyclic structure is different. Full details are given in Section~\ref{sec:equivalence}.

\begin{construction}[Chen, Bai, Wang, \cite{chen2010two}] \label{construction:LDPC}
Let $\F_q=\{0, \alpha^{0}, \alpha^1, \alpha^2, \ldots, \alpha^{q-2}\}$
be a field with a primitive element $\alpha$.  
Starting with an MDS code $C$ with parameters $[n, 2, n-1]_q$, $n\leq q+1$, partition the nonzero codewords of $C$ into classes $W_0, \ldots, W_q$ of size $(q-1)$ such that each class is closed under multiplication by nonzero field elements. The classes $W_0, \ldots, W_{(n-1)}$ contain the codewords of weight-$(n-1)$, with 0 in position $i$ of all codewords of $W_i$, respectively,  while the classes $W_{n}, \ldots, W_q$ contain the weight-$n$ codewords.
Form the matrix $W_{\text{MDS}}$ where each row $\mathbf{w}_i$ is a  representative of class $W_i$. Replace each element of $\F_q$ in $W_{\text{MDS}}$ with its $(q-1)$-fold matrix dispersion---that is,  $0$ is replaced by the $(q-1)\times(q-1)$ all-zeroes matrix; $\alpha^i$ is replaced by the $(q-1)\times(q-1)$ circulant matrix with $\textbf{u}_i$ as its first row, where the index on the circulant matrix positions goes from $0$ to $q-2$. The authors of \cite{chen2010two} name the resulting binary parity-check matrix $H_{\text{disp}}(W_{\text{MDS}})$, and its transpose as the parity-check matrix $H_{\text{II}}$.  
\end{construction}

By design, $H_{disp}(W_{MDS})$ and $H_{II}$ are both quasi-cyclic with index $q-1$. In particular, since $p\nmid q-1$, niether $H_{disp}(W_{MDS})$ nor $H_{II}$ are quasi-cyclic or hierarchical quasi-cyclic with index $p$. 
    
The following minimum distance bound for certain LDPC codes gives a slight improvement on the lower bound of $D+2$ in Proposition~\ref{prop:mindis} for cases when $n$ is even and $D<n$.  
\begin{proposition}[Xu, et al. \cite{xu2006construction}] \label{prop:dmin-LDPC} Let $C$ be a binary LDPC code with a parity-check matrix $H$ formed with  $\gamma$ rows and $\rho$ columns of block permutation matrices. Then $d\geq \gamma+2$ for $\gamma$ even. 
\end{proposition}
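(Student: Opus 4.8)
The plan is to combine two independent observations about a parity-check matrix $H$ built as a $\gamma \times \rho$ array of $t \times t$ permutation matrices: first, that every codeword of $C$ has even Hamming weight, so that $d$ is necessarily even; and second, the column-weight lower bound $d \ge \gamma + 1$. Since $\gamma$ even makes $\gamma+1$ odd, an even integer $d$ satisfying $d \ge \gamma+1$ must in fact satisfy $d \ge \gamma+2$, which is exactly the assertion. Recall throughout that $d$ equals the smallest number of linearly dependent columns of $H$, equivalently the minimum size of a column set that sums to $\mathbf{0}$ over $\F_2$.

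First I would establish the even-weight property. Fix any block-row $i$ and sum its $t$ rows over $\F_2$. Each block $P_{i,j}$ in that block-row is a permutation matrix, so the sum of its $t$ rows is the all-ones vector of length $t$, because every column of a permutation matrix contains a single $1$. Hence the sum of the $t$ rows of block-row $i$ is the all-ones vector $\mathbf{1}$ of length $\rho t$, so $\mathbf{1}$ lies in the row space of $H$ and is itself a valid parity check. Every codeword $\mathbf{c}$ therefore satisfies $\mathbf{1}\cdot \mathbf{c}^T = 0$, i.e. $\text{wt}_H(\mathbf{c}) \equiv 0 \pmod 2$, so $d$ is even.

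Next I would establish $d \ge \gamma + 1$. Here I use that each column of $H$ has weight exactly $\gamma$, since its restriction to each of the $\gamma$ block-rows is a single column of a permutation matrix and thus carries exactly one $1$; I also use that the Tanner graph is free of $4$-cycles, equivalently that any two columns of $H$ share a common $1$ in at most one row, which is the standing property guaranteed by the construction in \cite{xu2006construction}. Take a minimum-weight codeword with support $S$, pick a column $\mathbf{v} \in S$, and let $r_1, \dots, r_\gamma$ be its $\gamma$ check positions. Each $r_\ell$ must be covered an even, hence nonzero, number of times by the columns of $S$, so some column of $S$ other than $\mathbf{v}$ also carries a $1$ in row $r_\ell$. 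Because no two columns overlap in more than one row, a single column of $S \setminus \{\mathbf{v}\}$ can account for at most one of $r_1, \dots, r_\gamma$; covering all $\gamma$ of them therefore requires at least $\gamma$ further columns, giving $d = |S| \ge \gamma + 1$.

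Combining the two steps, $d$ is even and $d \ge \gamma+1$, so for $\gamma$ even we conclude $d \ge \gamma+2$. I expect the main obstacle to be the second step rather than the parity argument: the inequality $d \ge \gamma+1$ genuinely relies on the girth-at-least-$6$ (no-$4$-cycle) property, since without it one could repeat identical blocks and produce weight-$2$ codewords. I would therefore make explicit that this overlap condition is intrinsic to the Xu et al. family, as it is precisely what upgrades the elementary even-weight observation into the stated distance bound; the role of the hypothesis ``$\gamma$ even'' is then transparent, as it is exactly the parity that lets the even-weight constraint push the bound up by one.
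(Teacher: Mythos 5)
Your proof is correct, but there is nothing in the paper to compare it against: Proposition~\ref{prop:dmin-LDPC} is quoted from Xu et al.\ \cite{xu2006construction} and stated without proof. What you have reconstructed is essentially the standard argument from that literature: (i) summing the $t$ rows of any block row of $H$ gives the all-ones vector, so $\mathbf{1}$ lies in the row space and every codeword has even weight; (ii) each column has weight exactly $\gamma$, and under the row--column (no-$4$-cycle) constraint a minimum-weight support must contain, beyond a fixed column $\mathbf{v}$, at least one distinct column per check position of $\mathbf{v}$, giving $d \geq \gamma+1$; (iii) parity then forces $d \geq \gamma+2$ when $\gamma$ is even. All three steps are sound.

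The caveat you flag is genuine and worth keeping explicit. As transcribed in the paper, the proposition omits the no-$4$-cycle hypothesis, and without it the statement is false: your example of an array of identical permutation blocks has repeated columns and hence $d=2$. That hypothesis is built into the Xu et al.\ construction, so your reading of the citation is the right one, and step (ii) is the only place it is used. Note also that within the present paper this is not a vacuous distinction: the matrices $H_C$ of Construction~\ref{construction:MH} satisfy the no-$4$-cycle condition exactly when the starting code is MDS with $k=2$ (Proposition~\ref{prop:properties}), while for starting codes with $d<n-1$ the Tanner graph has girth $4$, so the first case of Corollary~\ref{cor:min-dist} is justified by this proposition only in the girth-$6$ regime that your argument requires.
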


\section{Construction I code properties} \label{sec:parameters}

In this section, we consider the properties of codes built from Construction~\ref{construction:MH}.

\subsection{Hierarchical quasi-cyclic structure}

In this subsection, we will characterize the precise hierarchical quasi-cyclic structure of the codes built from Construction~\ref{construction:MH}. We begin with a much simpler fact, showing that they are quasi-cyclic with index equal to the characteristic of the underlying field. 

\begin{theorem} \label{thm:qc}
    Let $C$ be a $[n,k,n-k+1]_{p^r}$ Reed-Solomon code. Then $\mathcal{C}(H_C)$ is quasi-cyclic with index $p$. 
\end{theorem}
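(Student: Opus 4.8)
The reader needs to show that $\mathcal{C}(H_C)$ is invariant under a cyclic shift by $p$ positions (or a multiple thereof), where $H_C$ is the binary parity-check matrix from Construction~\ref{construction:MH}. Let me think about the structure.

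We have a Reed-Solomon code $C$ of dimension $k$ over $\mathbb{F}_{p^r}$. Its codewords are evaluations $\mathbf{c}_p = (p(\alpha_1), \ldots, p(\alpha_n))$ of polynomials of degree $< k$. These codewords are listed as columns of $M_C$, ordered by $p$-ary lexicographic ordering on the polynomials (equivalently on coefficient vectors). The columns of $M_C$ are indexed by polynomials $p(x)$, and $H_C$ replaces each field symbol $\beta_i$ in a column with the length-$q$ unit vector $\mathbf{e}_i$.

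So $\mathcal{C}(H_C)$ is a binary code whose **codewords have length $q^k$** (the number of columns of $H_C$), and the parity-check matrix $H_C$ has $nq$ rows. The quasi-cyclic shift is on the **columns** of $H_C$ — i.e., on the coordinates of $\mathcal{C}(H_C)$, which are indexed by the polynomials.

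Let me figure out what shifting by $p$ does to the polynomial index. The $p$-ary lexicographic ordering on polynomials of degree $< k$: each polynomial $p(x) = a_0 + a_1 x + \cdots + a_{k-1}x^{k-1}$ with $a_i \in \mathbb{F}_{p^r}$. Each coefficient $a_i$ is itself written in $p$-ary form $a_i = \sum_j a_{i,j}\alpha^j$ with $a_{i,j} \in \{0,\ldots,p-1\}$. The total ordering treats this as a number in a mixed-radix (base-$p$) system.

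The *least significant digit* is the coefficient of $\alpha^0 = 1$ in the constant term $a_0$. Shifting the column index by $1$ in this ordering increments this least significant digit. Shifting by $p$ would increment the **next** digit (the coefficient of $\alpha^1$ in $a_0$, say), cycling the least-significant digit through a full period.

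**Now the key observation.** Construction~\ref{construction:MH} maps codeword to column. The constant-term contribution: $p(x) + c$ for $c \in \mathbb{F}_{p^r}$ evaluates to $(p(\alpha_i) + c)$ at each point. So adding a constant $c$ to the polynomial adds $c$ to **every** coordinate of the codeword $\mathbf{c}_p$.

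Let me think about shifting by $p$ columns. Given the $p$-ary lexicographic ordering, the columns come in blocks. The innermost digit (coefficient of $1$ in the constant term) ranges over $\{0, 1, \ldots, p-1\}$ as a block of $p$ consecutive columns. Within such a block, the polynomials are $q_0(x), q_0(x)+1, q_0(x)+2, \ldots, q_0(x)+(p-1)$ where $q_0$ has zero in that innermost digit. This is exactly the structure described in the **field partition property** (Definition~\ref{def:fieldpartition}), though that partitions into full blocks of size $p^r$, not $p$.

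So within each block of $p$ consecutive columns, adding $1 \in \mathbb{F}_p$ to the polynomial shifts us to the next column. Adding $1$ to the polynomial adds $1$ to every evaluation $p(\alpha_i)$. In the field $\mathbb{F}_{p^r}$, adding $1$ to a field element $\beta_j$ cycles it through its $\mathbb{F}_p$-coset: $0 \to 1 \to \cdots \to (p-1) \to 0$, and similarly for each coset $\beta, \beta+1, \ldots$. Under the unit-vector map $\beta_j \mapsto \mathbf{e}_j$, this corresponds to a **cyclic permutation of the $p$ unit vectors within each such coset** — which is precisely a cyclic shift of the corresponding $p$ rows of $H_C$.

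**The plan.** I would argue:

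First, identify that a cyclic shift of the column index by one position (within a block respecting the $p$-ary ordering) corresponds exactly to the map $p(x) \mapsto p(x) + 1$ on the indexing polynomials, because the innermost digit of the ordering is the $\mathbb{F}_p$-component of the constant term. Second, show that adding $1$ to $p(x)$ adds $1$ to every evaluation coordinate, which under the $\beta_i \mapsto \mathbf{e}_i$ map permutes the rows of $H_C$ within each group of $p$ rows corresponding to an additive $\mathbb{F}_p$-coset of $\mathbb{F}_{p^r}$. Third, observe that this row permutation is a symmetry of $H_C$ itself (it permutes rows, hence preserves the rowspace and the kernel), and it is compatible with a cyclic shift of the columns by $p$. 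The shift by $p$ restores the innermost digit to its original value while advancing the next digit, so globally the shift-by-$p$ on columns is realized by a fixed row permutation of $H_C$, establishing quasi-cyclicity with index $p = \tilde n / L$.

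\begin{proof}[Proof sketch]
The coordinates of $\mathcal{C}(H_C)$, i.e.\ the columns of $H_C$, are indexed by polynomials $p(x)$ of degree $<k$ in $p$-ary lexicographic order. The least significant digit of this ordering is the $\mathbb{F}_p$-coefficient of $\alpha^0$ in the constant term of $p(x)$, so incrementing the column index by one corresponds to the map $p(x)\mapsto p(x)+1$ on the indexing polynomial, as long as this digit does not overflow.

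Adding $1$ to $p(x)$ adds $1$ to every evaluation $p(\alpha_i)$, and under the map $\beta_j\mapsto\mathbf{e}_j$ this cyclically permutes the $p$ unit vectors within each additive $\mathbb{F}_p$-coset of $\mathbb{F}_{p^r}$. Grouping the $nq$ rows of $H_C$ into $n\cdot p^{r-1}$ blocks of $p$ rows, each corresponding to such a coset in one evaluation position, this is a fixed permutation $\sigma$ of the rows of $H_C$.

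A single column shift is therefore realized by the row permutation $\sigma$ whenever the innermost digit does not overflow; shifting by the full block size $p$ returns the innermost digit to its starting value while advancing the next digit, so the shift-by-$p$ on columns is realized by $\sigma^{p}=\mathrm{id}$ together with the induced advance. In every case the shift by $p$ columns is implemented by a fixed row permutation of $H_C$, which preserves the rowspace and hence the kernel $\mathcal{C}(H_C)$. Thus $\mathcal{C}(H_C)$ is invariant under cyclic shifts by $p$ positions, i.e.\ it is quasi-cyclic with index $p$.
\end{proof}

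**Main obstacle.** The subtle point I expect to require the most care is the *overflow* boundary: a raw cyclic shift of columns by $p$ is not literally the same as the algebraic operation $p(x)\mapsto p(x)+p\cdot(\text{digit})$, because the mixed-radix ordering carries between digits and because the cyclic wrap-around at the end of the length-$q^k$ coordinate vector must also be reconciled with the algebraic shift. I would need to verify carefully that the cyclic (wrap-around) structure of the shift on the full column set is exactly matched by the coset-cycling induced on the rows — in particular that the block decomposition into groups of $p$ is preserved consistently across all $n$ evaluation positions simultaneously, so that one and the same row permutation $\sigma$ works for every column. This bookkeeping, rather than any deep idea, is where the real work lies.
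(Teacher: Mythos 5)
Your first two steps coincide with the paper's own argument: with the $p$-ary lexicographic ordering, the $p$ consecutive columns indexed by $q_0(x), q_0(x)+1, \dotsc, q_0(x)+(p-1)$ differ by adding constants of $\F_p$ to the polynomial, which adds the same constant to every evaluation and hence cyclically steps through the $p$ rows of each additive $\F_p$-coset block. That observation shows $H_C$ is built from $p\times p$ circulant blocks, and it is all the paper uses. The gap is in your third step. The claim that ``the shift by $p$ columns is implemented by a fixed row permutation of $H_C$'' is false, for exactly the carry reason you flag as your main obstacle: adding $p$ to the lexicographic counter equals adding one fixed field element to the polynomial only at columns where the second digit does not overflow; at overflow columns the polynomial changes by a different element, so no single row permutation works for all columns, and invoking $\sigma^p=\mathrm{id}$ ``together with the induced advance'' does not repair this, since the induced advance is precisely the column-dependent, carry-driven part. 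Indeed your conclusion is false, not merely unproven: for the $[3,2,2]_4$ code of Example~\ref{ex:construction}, whose $\mathcal{C}(H_C)$ is the $[16,8,4]$ code, the code is \emph{not} invariant under cyclically shifting its $16$ coordinates by $p=2$. Pulling the row of $H_C$ labeled by evaluation point $1$ and field element $0$ back along that shift yields the indicator vector of the four polynomials $\alpha+1$, $\alpha x$, $(\alpha+1)x+1$, $(\alpha+1)x+\alpha$, and a short linear-algebra check shows this vector is not in the row space of $H_C$; hence the shifted code differs from $\mathcal{C}(H_C)$.

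There is also a definitional mismatch hiding in your final ``i.e.'': under this paper's conventions, invariance under shifts by $p$ positions would mean shifting constraint $L=p$ and therefore index $\tilde{n}/L=q^k/p$, not index $p$ (the two agree only in the degenerate case $q^k=p^2$). Index $p$ means that, in some representation, the code is invariant under shifts by $L=q^k/p$, equivalently that the parity-check matrix is an array of $p\times p$ circulants as in Equation~\ref{eq:circulant}. Your steps one and two already deliver exactly this, so the correct finish is the paper's: since every $p\times p$ block of $H_C$ is a shifted identity or zero, the column permutation that simultaneously shifts each consecutive block of $p$ columns intertwines with the corresponding block-wise row permutation and therefore preserves $\ker H_C$; de-interleaving the coordinates turns this product of disjoint $p$-cycles into a global cyclic shift by $q^k/p$, which is quasi-cyclicity with index $p$. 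In short, drop the global shift-by-$p$ claim entirely and conclude via the circulant-block characterization of quasi-cyclic codes.
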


\begin{proof}
    Order the elements of $\F_{p^r}$ and the polynomials in $C$ as in Definition~\ref{def:pary}. We will show that, with this ordering, the matrix $H_C$ is quasi-cyclic with $p \times p$ circulant submatrices.

    Let $\alpha$ be a primitive element of $\F_{p^r}$. Let $\beta \in \F_q$ be arbitrary and let $q_0(x) = \sum_{i=0}^{k-1} a_i x^i$ be a polynomial in $C$ such that $a_0\in \F_{p^{r}}\setminus \F_p$.
    Then $q_0(\beta) = \sum_{i=0}^{k-1} a_i \beta^i = \sum_{i=0}^{r-1} \gamma_i \alpha^i$ for some $\gamma_i \in \F_p$. 

    Now let $q_c(x) = q_0(x) + c$ for $c \in \F_p$. Notice that 
    \begin{align*}
        q_c(\beta) &= q_0(\beta) + c = \sum_{i=0}^{r-1} \gamma_i \alpha^i + c \\
        &=  \sum_{i=1}^{r-1} \gamma_i \alpha_i + (\gamma_0+c),
    \end{align*}
    where the addition $\gamma_0 + c$ is taken modulo $p$. In particular, because of our choice in polynomial and field element ordering, this shows that the polynomial immediately after $q_0$ (modulo $p$) evaluates to the element immediately after $q_0(\beta)$, $q_0(\beta)+1$ (modulo $p$). Hence, $H_C$ is made up of $p \times p$ circulants.
\end{proof}

In fact, we can be more precise about the structure of the quasi-cyclic matrix by making use of the hierarchical quasi-cyclic framework introduced in \cite{wang2013hierarchical}, which was also defined and motivated in Section~\ref{sec:prelim}. We begin with a result that shows that individual $q\times q$ submatrices of $H_C$ are hierarchical quasi-cyclic.

\begin{proposition} \label{prop:hqc-matrices}
    Let $C$ be a $[n,k,n-k+1]_{p^r}$ Reed-Solomon code, let $H_C$ be the binary $nk\times q^k$ parity-check matrix as formed in Construction~\ref{construction:MH}. Assume the elements of $\F_{p^r}$ and polynomials in $C$ are ordered in the lexicographic ordering as in Construction~\ref{construction:MH}. Let $q(x)$ be some polynomial in $C$ with $0$ constant term and let $\beta \in \F_{p^r}$ be some evaluation point. Let $P$ be the $p^r \times p^r$ sub-matrix of $H_C$ corresponding to $q(\beta)$ and the evaluations of all shifts of $q(x)$ at $\beta$ by constant terms. $P$ is weight one hierarchical quasi-cyclic. In particular, writing $q(\beta)$ as its $p$-ary expansion $q(\beta) = \sum_{i=0}^{r-1} a_i \alpha^i$. $P$ is given by
    \[P = I_{p,a_{r-1}} \otimes I_{p,a_{r-2}} \otimes \dots \otimes I_{p,a_0}\]
    where $\otimes$ is the Kronecker product.
\end{proposition}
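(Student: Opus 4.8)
The plan is to show that $P$ is a $q \times q$ permutation matrix and to pin down the location of its unique $1$ in each column by a direct digit-by-digit computation, then to recognize the resulting pattern as the claimed Kronecker product. First I would observe that the $q = p^r$ columns of $P$ are indexed by the constant shifts $q(x) + c$, $c \in \F_{p^r}$, while its $q$ rows are indexed by the field-element positions in the block of $H_C$ belonging to the evaluation point $\beta$. By Construction~\ref{construction:MH}, the column of the shift $q(x)+c$ carries a single $1$, located in the row indexed by the value $q(\beta)+c$ of that polynomial at $\beta$; hence $P$ is a permutation matrix, and its entries are completely determined by the map $c \mapsto q(\beta)+c$ together with the positional indexing of $\F_{p^r}$.

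Next I would make the indexing explicit. Under the $p$-ary lexicographic order of Definition~\ref{def:pary}, an element $\sum_{\ell=0}^{r-1} b_\ell \alpha^\ell$ with $b_\ell \in \{0,\dots,p-1\}$ occupies position $\sum_{\ell=0}^{r-1} b_\ell p^\ell$; that is, its position is the integer whose base-$p$ digits are $(b_0, \dots, b_{r-1})$, with $b_0$ least significant. The key algebraic fact is that, in the additive basis $1, \alpha, \dots, \alpha^{r-1}$, addition in $\F_{p^r}$ is componentwise modulo $p$. Writing $q(\beta) = \sum_\ell a_\ell \alpha^\ell$ and $c = \sum_\ell c_\ell \alpha^\ell$, the shifted value $q(\beta)+c$ therefore has digits $(a_\ell + c_\ell \bmod p)_\ell$. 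Reading off positions, this shows that, in terms of the base-$p$ digits $(i_0,\dots,i_{r-1})$ and $(j_0,\dots,j_{r-1})$ of the row and column indices, $P_{ij} = 1$ if and only if $i_\ell - j_\ell \equiv a_\ell \pmod p$ for every $\ell$.

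Finally I would expand the Kronecker product entrywise. Decomposing the row and column indices in the same mixed-radix fashion and using that the leftmost factor of $I_{p,a_{r-1}} \otimes \cdots \otimes I_{p,a_0}$ governs the most significant digit while the rightmost governs the least, the $(i,j)$ entry factors as $\prod_{\ell=0}^{r-1} (I_{p,a_\ell})_{i_\ell j_\ell}$. Since each single-digit shifted identity satisfies $(I_{p,a_\ell})_{i_\ell j_\ell} = 1$ exactly when $i_\ell - j_\ell \equiv a_\ell \pmod p$, this product equals $1$ precisely under the condition derived above, so the two matrices agree entrywise. Weight-one hierarchical quasi-cyclicity is then immediate, since each tensor factor is a single shifted identity.

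I expect the only real obstacle to be bookkeeping with conventions: aligning the order of significance in the $p$-ary lexicographic labeling with the order of the tensor factors (leftmost factor equals most significant digit) and with the shift-direction convention fixed for $I_{t,r}$. Once these are matched---most cleanly by first checking the base case $r=1$, where $P = I_{p,a_0}$---the digitwise computation is routine, relying only on the componentwise additivity of $\F_{p^r}$ in the power basis already used implicitly in the proof of Theorem~\ref{thm:qc}.
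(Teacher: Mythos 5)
Your proof is correct, but it is organized differently from the paper's. You establish the identity $P = I_{p,a_{r-1}} \otimes \cdots \otimes I_{p,a_0}$ by a single closed-form entrywise computation: positions in the lexicographic ordering are base-$p$ integers, addition in $\F_{p^r}$ is componentwise modulo $p$ in the basis $1, \alpha, \dots, \alpha^{r-1}$, so $P_{ij}=1$ exactly when $i_\ell - j_\ell \equiv a_\ell \pmod{p}$ for every digit $\ell$, which is precisely the entrywise description of the Kronecker product of shifted identities. The paper instead proceeds by induction on the level $s$: the base case is the index-$p$ quasi-cyclicity of Theorem~\ref{thm:qc}, and the inductive step shows that adding $\alpha^t$ shifts rows and columns by $p^{t-1}$ positions, so that the nonzero $p^{t}\times p^{t}$ sub-blocks are all equal and have the form $I_{p,a_{t}} \otimes \cdots \otimes I_{p,a_0}$. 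The two arguments rest on the same algebraic facts (the construction places the $1$ of column $q(x)+c$ in row $q(\beta)+c$, and the power-basis/lexicographic correspondence), but your mixed-radix verification is shorter and makes the convention-matching (which tensor factor governs which digit, and the direction of the shift in $I_{t,r}$) fully explicit, which is indeed the only delicate point; your entry convention $(I_{p,a})_{i j}=1 \iff i-j\equiv a \pmod p$ is the one consistent with the paper's examples. What the paper's induction buys in exchange is that it exhibits the nested circulant block structure level by level --- each $p^{s+1}\times p^{s+1}$ block assembled from identical $p^{s}\times p^{s}$ blocks --- which directly mirrors Definition~\ref{def:hqc} and is the form in which the result gets reused in the proof of Theorem~\ref{thm:hqc-codes}; with your approach one would want to add a remark noting that the digitwise congruence description is equivalent to that nested-lift structure.
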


\begin{proof}
    $C$ has $p^{rk}$ polynomials, which we assume are ordered in lexicographic ordering using some primitive element $\alpha \in \F_{p^r}$. Thus $\alpha$ is a root of a primitive polynomial over $\mathbb{F}_p$ of degree $r$, and each element of $\mathbb{F}_{p^r}$ can be written as $\sum_{i=0}^{r-1} a_i\alpha^i$, for some $a_i\in \mathbb{F}_p$. Let $q(x)$ be some polynomial in $C$ with zero constant term and let $\beta \in \F_{p^r}$ be an evaluation point. The shifts of $q(x)$ by all constants in $\F_{p^r}$ gives $p^r$ total polynomials in lexicographic ordering. The corresponding rows are labeled in lexicographic ordering by the elements of $\F_{p^r}$, so the matrix corresponding to $q(\beta)$, which we will call $P$, is a $p^r \times p^r$ binary matrix.

    We can write $q(\beta)$ in its $p$-ary expansion as $q(\beta)= \sum_{i=0}^{r-1} a_i \alpha^i$ for some $a_i \in \F_p$. Because the nonzero entries of $P$ are coming from evaluations, it is clear that each column has weight $1$. To see that each row also has weight $1$, let $\gamma = \sum_{i=0}^{r-1} b_i \alpha^i$ and $\gamma' = \sum_{i=0}^{r-1} b_i' \alpha^i$ be two constant terms and consider $q(\beta) + \gamma = q(\beta) + \gamma'$. This gives 
    \[\sum_{i=0}^{r-1} (a_i + b_i)\alpha^i = \sum_{i=0}^{r-1} (a_i + b_i')\alpha^i\]
    which implies each $b_i = b_i'$ and so $\gamma = \gamma'$, proving that the row weights of $P$ are also $1$. 

    Let $s \in \{0, \dotsc, r-1\}$. Consider the shift of $q(\beta)$ by $\alpha^s$, 
    \[q(\beta) + \alpha^s =  \left( \sum_{i \in \{0, \dotsc, r-1\} \setminus \{s\}}^{r-1} a_i \alpha^i \right) + (a_s+1)\alpha^s\]
    where the sum $a_s+1$ is taken modulo $p$, and so for any $s$, repeating this sum accounts for exactly $p$ distinct polynomials $q(x) + \alpha^s$ and $p$ distinct outputs $q(\beta) + \alpha^s$. Taking $s=0$, we obtain the result in Theorem~\ref{thm:qc}, that $H_C$ is quasi-cyclic with index $p$.

    For any other $s$, within $P$, adding an $\alpha^s$ both corresponds to a polynomial that is $p^s$ positions forward in the lexicographic ordering on polynomials and an output value that is $p^s$ positions forward in the lexicographic ordering on field elements of $\F_{p^r}$ modulo $p^r$. Hence, adding each $\alpha^s$ gives exactly $p$ shifts both by $p^s$ rows and $p^s$ columns modulo $p^{s+1}$ within the $p^{s+1} \times p^{s+1}$ sub-block. These shifts occur $p$ times within $P$ before returning to their initial position. The resulting structure of the matrix then depends fully on the initial $p$-ary expansion $q(\beta)=\sum_{i=1}^{r-1} a_i\alpha^i$.

    More precisely, these shifts live within a sub-block of size $p^{s+1} \times p^{s+1}$ within $P$. As each row and column of $P$ has weight $1$, we see that each $p^{s+1} \times p^{s+1}$ sub-block contains all of the nonzero entries in its corresponding rows and columns.

    We will use induction to show that, for a fixed $r$, for any $s \in\{0, \dotsc, r-1\}$, the nonzero $p^s \times p^s$ submatrices of $P$ are weight $1$ hierarchical quasi-cyclic matrices of the form 
    \[ I_{p,a_s} \otimes I_{p,a_{s-1}} \otimes \dots \otimes I_{p,a_0}. \]
    
    For the base case, $s=0$, and we know that $H_C$ is quasi-cyclic with index $p$ from Theorem~\ref{thm:qc}. 
    The elements of $\F_{p^r}$ are ordered such that $\{0, \dotsc, p-1\}$ occur in that order $p^{r-1}$ times within $P$. In the first such block, the first entry corresponds to evaluation of $q(\beta)$ and so the $1$ in this column appears $a_0$ positions down from the top left corner of this $p \times p$ block. The first column of any other $p \times p$ block in $P$ block can be obtained by adding some linear combination of powers of $\alpha^i$ with $i \in [r-1]$. Notice that
    \[ q(\beta) + \sum_{i=1}^{r-1} b_i \alpha^i = \sum_{i=1}^{r-1}(b_i + a_i\alpha^i) + a_0,\]
    so the $1$ also appears in the $a_0$ position. Therefore, all $p \times p$ nonzero quasi-cyclic blocks in $P$ have the form $I_{p,a_0}$.

    Assume $1 \leq t \leq r-1$ and the claim holds for all $s \in \{0, \dotsc, t-1\}$. In particular, this means that the $p^{t-1} \times p^{t-1}$ submatrices of $P$ have the form 
    \[ I_{p,a_{t-1}} \otimes \dots \otimes I_{p,a_0}.\]
    By adding $\alpha^t$ to $q_\ell(x)$, we obtain
    \[q_\ell(\beta_j) + \alpha^t =  \left( \sum_{i \in \{0, \dotsc, r-1\} \setminus \{t\}}^{r-1} a_i \alpha^i \right) + (a_t+1)\alpha^t\]
    which corresponds to a shift by both $p^{t-1}$ rows and $p^{t-1}$ columns in $P$. The elements of $\F_{p^r}$ are ordered lexicographically, and so, each ordered set of $p^{t-1}$ elements all have the same coefficient on the $\alpha^t$. In particular, the nonzero entries of any of the $p^{t-1} \times p^{t-1}$ sub-blocks must occur in the block corresponding to the coefficient $a_t$, which is $a_t$ $p^{t-1} \times p^{t-1}$ blocks down in the $p^t \times p^t$ submatrix. The first column of any other $p^{t-1} \times p^{t-1}$ block in $P$ can be obtained by adding some linear combination of powers of $\alpha^i$ with $i \in \{t+1, \dotsc, r-1\}$. Notice that
    \[ q_\ell (\beta_j) + \sum_{i=t+1}^{r-1} b_i \alpha^i = \sum_{i=t+1}^{r-1}(b_i + a_i)\alpha^i + \sum_{i=0}^t a_i \alpha^i,\] 
    so all of the $p^{t-1} \times p^{t-1}$ nonzero submatrices are the same. Therefore, all $p^t \times p^t$ hierarchical quasi-cyclic blocks in $P$ have the form
    \[ I_{p,a_{t}} \otimes \dots \otimes I_{p,a_0},\]
    completing the inductive step.
\end{proof}

We can now show that the entire matrix $H_C$ is hierarchical quasi-cyclic and be precise about its structure.

\begin{theorem} \label{thm:hqc-codes}
    Let $C$ be a $[n,k,n-k+1]_{p^r}$ Reed-Solomon code, let $H_C$ be the binary parity-check matrix as formed in Construction~\ref{construction:MH}, and let $\mathcal{C} = \mathcal{C}(H_C)$ be its parity-check code. Assume the elements of $\F_{p^r}$ and polynomials in $C$ are ordered in the lexicographic ordering as in Construction~\ref{construction:MH}. The $p^{kr}$ polynomials in $C$ can be partitioned into $p^{r(k-1)}$ parts, each with representative a polynomial with constant term $0$. Let $q_1(x), \dotsc, q_{r(k-1)}(x)$ be these polynomials, ordered in lexicographic order, with $\alpha \in \F_{p^r}$ a fixed primitive element. For each $\beta_j$ with $j \in [n]$ and each $q_\ell(x)$ with $\ell \in [r(k-1)]$, we can write $q_\ell(\beta_j) = \sum_{i=0}^{r-1} a_i^{(j,\ell)} \alpha^i$ for some $a_i \in \F_p$. With this setup, the parity-check matrix $H_C$ can be represented as a weight $1$ level-$r$ hierarchical quasi-cyclic matrix $H(x_1, x_2, \dotsc, x_{r})$ with $J_i = np^{r-i}$ and $L_i = p^{rk-i}$ for $i \in [r]$, and in $H(x_1, \dotsc, x_{r})$, 
    \begin{equation} \label{eq:polynomial}
        h_{j,\ell}(x_1, \dotsc, x_r) = \prod_{i=0}^{r-1} x_{i+1} ^{a_i^{(j,\ell)}}.
    \end{equation}
    Equivalently, in $H_C$, the $p^r \times p^r$ block matrix $H_{j,\ell}$ is given by the Kronecker product
    \begin{equation} \label{eq:kronecker}
        H_{j,\ell} = I_{p,a_{r-1}^{(j,\ell)}} \otimes I_{p,a_{r-2}^{(j,\ell)}} \otimes \dotsm \otimes I_{p,a_0^{(j,\ell)}}.
    \end{equation}
\end{theorem}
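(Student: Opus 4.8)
The plan is to reduce the global statement to the local structure already established in Proposition~\ref{prop:hqc-matrices}, and then reconcile the resulting block decomposition with the bookkeeping of Definition~\ref{def:hqc}. First I would set up the partition of the column index set. Since $C$ is a Reed-Solomon code it contains every constant polynomial, so it is closed under adding an arbitrary element of $\F_{p^r}$ to the constant term; this is precisely the field partition property of Definition~\ref{def:fieldpartition}. Hence the $p^{rk}$ polynomials split into $p^{r(k-1)}$ classes of size $p^r$, each generated by a representative $q_\ell(x)$ of zero constant term together with its $p^r$ constant shifts. Under the lexicographic ordering of Construction~\ref{construction:MH}, the shifts of a fixed $q_\ell$ occupy $p^r$ consecutive columns, so the columns group naturally into $p^{r(k-1)}$ blocks of width $p^r$.

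Next I would organize the rows analogously: each of the $n$ evaluation points $\beta_j$ contributes $p^r$ consecutive rows (one unit vector $\mathbf{e}_i$ per field value), so the rows group into $n$ blocks of height $p^r$. With both groupings in place, $H_C$ becomes an $n \times p^{r(k-1)}$ array of $p^r \times p^r$ blocks, where $H_{j,\ell}$ records the evaluations at $\beta_j$ of $q_\ell$ and all its constant shifts. This is exactly the submatrix $P$ analyzed in Proposition~\ref{prop:hqc-matrices}, so $H_{j,\ell} = I_{p,a_{r-1}^{(j,\ell)}} \otimes \dots \otimes I_{p,a_0^{(j,\ell)}}$ with exponents read from the $p$-ary expansion $q_\ell(\beta_j) = \sum_{i=0}^{r-1} a_i^{(j,\ell)} \alpha^i$, giving Equation~\ref{eq:kronecker} immediately.

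It then remains to match this with the HQC formalism. Taking $K = r$ and all lifting degrees equal to $p$, I would verify that the monomial $h_{j,\ell} = \prod_{i=0}^{r-1} x_{i+1}^{a_i^{(j,\ell)}} = x_1^{a_0} x_2^{a_1} \cdots x_r^{a_{r-1}}$ maps, under the convention of Definition~\ref{def:hqc} (Equation~\ref{eq:hqc-pcmatrix}), to $I_{p,a_{r-1}} \otimes \dots \otimes I_{p,a_0}$: since $x_m$ carries the shift $a_{m-1}$ and the Kronecker factors are listed from $x_r$ down to $x_1$, this agrees with the block computed above, establishing Equation~\ref{eq:polynomial}. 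The dimension formulas follow from the nested structure: the top-level polynomial matrix is $n \times p^{r(k-1)}$, so $J_r = n$ and $L_r = p^{r(k-1)}$, and each descent of one level multiplies both dimensions by $p$, yielding $J_i = np^{r-i}$ and $L_i = p^{rk-i}$ and recovering the overall size $np^r \times p^{rk}$ of $H_C$ at level one. The main obstacle is not any single hard computation but the careful alignment of the three ordering conventions---on field elements, on polynomials, and on Kronecker factors---so that the consecutive row and column groupings place each shift exactly where the nested Kronecker-product structure of Definition~\ref{def:hqc} requires; once Proposition~\ref{prop:hqc-matrices} is in hand, confirming this alignment is the crux, and the remainder is bookkeeping.
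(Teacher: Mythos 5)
Your proposal is correct and follows essentially the same route as the paper's proof: partition the columns into $p^{r(k-1)}$ classes of constant shifts of zero-constant-term representatives, apply Proposition~\ref{prop:hqc-matrices} to each $p^r \times p^r$ block $H_{j,\ell}$ to obtain Equation~\ref{eq:kronecker}, translate to Equation~\ref{eq:polynomial} via the Kronecker-ordering convention of Definition~\ref{def:hqc}, and then count block rows and columns to get $J_i = np^{r-i}$ and $L_i = p^{rk-i}$. Your explicit verification of the variable-to-Kronecker-factor alignment and your recursive derivation of the level indices are just slightly more detailed renderings of the same bookkeeping the paper performs.
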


\begin{proof}
    $C$ has $p^{rk}$ polynomials, which we assume are ordered in lexicographic ordering using some primitive element $\alpha \in \F_{p^r}$. Thus $\alpha$ is a root of a primitive polynomial over $\mathbb{F}_p$ of degree $r$, and each element of $\mathbb{F}_{p^r}$ can be written as $\sum_{i=0}^{r-1} a_i\alpha^i$, for some $a_i\in \mathbb{F}_p$. Only considering the set of polynomials with a $0$ constant term, we see that there are $p^{rk}/p^r = p^{rk-r}$ such polynomials. We label these polynomials $q_1(x), \dotsc, q_{rk-r}(x)$. We think of each $q_\ell(x)$ as representing an equivalence class $[q_\ell(x)]$ made up of $q_\ell(x)$ and all its translates by some constant in $\F_{p^r}$. These classes clearly partition the polynomials in $C$, i.e., there are $p^{rk-r}$ column blocks of polynomials, each in lexicographic ordering with size $p^r$. 

    Each pair $\beta_j$, $q_\ell(x)$ corresponds to a $p^r \times p^r$ binary matrix, which we call $H_{j,\ell}$. We now consider $q_\ell(\beta_j)$, which can be written uniquely as some linear combination of powers of $\alpha$, $q_\ell(\beta_j) = \sum_{i=0}^{r-1} a_i^{(j,\ell)}\alpha^i$ for some $a_i^{(j,\ell)} \in \F_p$. By Proposition~\ref{prop:hqc-matrices}, we obtain Equation~\ref{eq:kronecker}. Using the representation for HQC codes as given in Definition~\ref{def:hqc}, this is equivalent to the condition in Equation~\ref{eq:polynomial}.

    Each matrix $H_{j,\ell}$ is HQC of level $r$. Further, it is clear from the representation in Equation~\ref{eq:kronecker} that within each $H_{j,\ell}$, level $i$ for $i \in [r]$ has index $\tilde{J_i} = \tilde{L_i} = p^{r-i}$. Because these are all the same, this idea extends to the full matrix $H(x_1, \dotsc, x_r)$, but we need to multiply all of these by the number of $H_{j,\ell}$ matrices in each row and column. There are $n$ block rows for each evaluation point and $p^{rk-r}$ polynomials, so we obtain $J_i = n p^{r-i}$ and $L_i = p^{rk-i}$ for $i \in [r]$. 
    This completes the proof of all items in the statement.
\end{proof}

We give three examples of codes constructed in this method from Reed-Solomon codes. We have provided visual examples of hierarchical quasi-cyclic matrices in each example. Due to their size, these visuals are in the Appendix.

\begin{example} \label{ex:gf25}
    We give an example over $\F_{25}$. Let $\alpha$ be a primitive element of $\F_{25}$ and let $C$ be a $[9,2,8]_{5^2}$ Reed-Solomon code with evaluation points $\beta_1=0, \beta_2=1$, $\beta_3=2$, $\beta_4=3$, $\beta_5=4$, $\beta_6=\alpha$, $\beta_7=2\alpha$ $\beta_8=3\alpha$, and $\beta_9=4\alpha$. We can order the elements of $\F_{25}$ as follows (reading across rows first, then down columns):

    \[ \begin{array}{c c c c c}
        0 & 1 & 2 & 3 & 4 \\
        \alpha, & \alpha+1, & \alpha+2, & \alpha+3, & \alpha+4, \\
        2\alpha, & 2\alpha+1, & 2\alpha+2, & 2\alpha+3, & 2\alpha+4, \\
        3\alpha, & 3\alpha+1, & 3\alpha+2, & 3\alpha+3, & 3\alpha+4, \\
        4\alpha, & 4\alpha+1, & 4\alpha+2, & 4\alpha+3, & 4\alpha+4.
    \end{array}\]

    The polynomial elements of $C$ can be ordered similarly, with the additional constraint that $1 < x$. For illustration purposes, consider the polynomial $q(x)=(3\alpha+4)x$ and the evaluation point $q_6=\alpha$. Then the $25 \times 25$ submatrix corresponding to the entry $q(\beta_6)=q(\alpha)=2\alpha+4$ can be visualized as shown in Figure~\ref{fig:gf25} (see Appendix). The information contained in this submatrix can be expressed significantly more compactly as $x_1^2 x_0^4$. Indeed, while it would be cumbersome to write out the entire $225 \times 625$ binary matrix $H_C$, we can write it much more compactly demonstrated as the $9 \times 25$ block matrix $H(x_0,x_1)$, given below.

\noindent\scalebox{0.66}{
\small
$\begin{bmatrix}
1 & 1 & 1 & 1 & 1 & 1 & 1 & 1 & 1 & 1 & 1 & 1 & 1 & 1 & 1 & 1 & 1 & 1 & 1 & 1 & 1 & 1 & 1 & 1 & 1 \\ 
1 & x_0                    & x_0^2 & x_0^3 & x_0^4 & x_1& x_1x_0 & x_1x_0^2                    & x_1x_0^3                    & x_1x_0^4                    & x_1^2                        & x_1^2x_0                    & x_1^2x_0^2 & x_1^2x_0^3 & x_1^2x_0^4 & x_1^3                        & x_1^3x_0                    & x_1^3x_0^2 & x_1^3x_0^3 & x_1^3x_0^4 & x_1^4                        & x_1^4x_0                    & x_1^4x_0^2 & x_1^4x_0^3 & x_1^4x_0^4 \\ 
1 & x_0^2 & x_0^4 & x_0                    & x_0^3 & x_1^2                        & x_1^2x_0^2 & x_1^2x_0^4 & x_1^2x_0                    & x_1^2x_0^3 & x_1^4                        & x_1^4x_0^2 & x_1^4x_0^4 & x_1^4x_0                    & x_1^4x_0^3 & x_1& x_1x_0^2                    & x_1x_0^4 & x_1x_0 & x_1x_0^3                    & x_1^3                        & x_1^3x_0^2 & x_1^3x_0^4 & x_1^3x_0                    & x_1^3x_0^3 \\ 
1 & x_0^3 & x_0                    & x_0^4 & x_0^2 & x_1^3                        & x_1^3x_0^3 & x_1^3x_0                    & x_1^3x_0^4 & x_1^3x_0^2 & x_1& x_1x_0^3                    & x_1x_0 & x_1x_0^4                    & x_1x_0^2                    & x_1^4                        & x_1^4x_0^3 & x_1^4x_0                    & x_1^4x_0^4 & x_1^4x_0^2 & x_1^2                        & x_1^2x_0^3 & x_1^2x_0                    & x_1^2x_0^4 & x_1^2x_0^2 \\ 
1 & x_0^4 & x_0^3 & x_0^2 & x_0                    & x_1^4                        & x_1^4x_0^4 & x_1^4x_0^3 & x_1^4x_0^2 & x_1^4x_0                    & x_1^3                        & x_1^3x_0^4 & x_1^3x_0^3 & x_1^3x_0^2 & x_1^3x_0                    & x_1^2                        & x_1^2x_0^4 & x_1^2x_0^3 & x_1^2x_0^2 & x_1^2x_0                    & x_1& x_1x_0^4                    & x_1x_0^3                    & x_1x_0^2                    & x_1x_0 \\ 
1 & x_1                    & x_1^2 & x_1^3 & x_1^4 & x_1^4x_0^4 & x_0^4                        & x_1x_0^4                    & x_1^2x_0^4 & x_1^3x_0^4 & x_1^3x_0^3 & x_1^4x_0^3 & x_0^3                        & x_1x_0^3                    & x_1^2x_0^3 & x_1^2x_0^2 & x_1^3x_0^2 & x_1^4x_0^2 & x_0^2                        & x_1x_0^2                    & x_1x_0 & x_1^2x_0                    & x_1^3x_0                    & x_1^4x_0                    & x_0\\ 
1 & x_1^2 & x_1^4 & x_1                    & x_1^3 & x_1^3x_0^3 & x_0^3                        & x_1^2x_0^3 & x_1^4x_0^3 & x_1x_0^3                    & x_1x_0 & x_1^3x_0                    & x_0& x_1^2x_0                    & x_1^4x_0                    & x_1^4x_0^4 & x_1x_0^4                    & x_1^3x_0^4 & x_0^4                        & x_1^2x_0^4 & x_1^2x_0^2 & x_1^4x_0^2 & x_1x_0^2                    & x_1^3x_0^2 & x_0^2                        \\ 
1 & x_1^3 & x_1                    & x_1^4 & x_1^2 & x_1^2x_0^2 & x_0^2                        & x_1^3x_0^2 & x_1x_0^2                    & x_1^4x_0^2 & x_1^4x_0^4 & x_1^2x_0^4 & x_0^4                        & x_1^3x_0^4 & x_1x_0^4                    & x_1x_0 & x_1^4x_0                    & x_1^2x_0                    & x_0& x_1^3x_0                    & x_1^3x_0^3 & x_1x_0^3                    & x_1^4x_0^3 & x_1^2x_0^3 & x_0^3                        \\ 
1 & x_1^4 & x_1^3 & x_1^2 & x_1                    & x_1x_0  & x_0& x_1^4x_0                    & x_1^3x_0                    & x_1^2x_0                    & x_1^2x_0^2 & x_1x_0^2                    & x_0^2                        & x_1^4x_0^2 & x_1^3x_0^2 & x_1^3x_0^3 & x_1^2x_0^3 & x_1x_0^3                    & x_0^3                        & x_1^4x_0^3 & x_1^4x_0^4 & x_1^3x_0^4 & x_1^2x_0^4 & x_1x_0^4                    & x_0^4         
\end{bmatrix} $
}
    
\end{example}

\begin{example} \label{ex:gf27}
    We give an example over $\F_{27}$. Let $\alpha$ be a primitive element of $\F_{27}$ and let $C$ be a $[4,3,2]_{3^3}$ Reed-Solomon code with evaluation points $\beta_1=1$, $\beta_2=2\alpha$, $\beta_3=\alpha^2+\alpha+2$, $\beta_4=2\alpha^2+1$. We can order the elements of $\F_{27}$ as follows (reading across rows first, then down columns):
    {\footnotesize
    \[
    \begin{array}{ccccccccc}
        0, & 1, & 2, & \alpha, & \alpha+1, & \alpha+2, & 2\alpha, & 2\alpha+1, & 2\alpha+2, \\
        \alpha^2, & \alpha^2+1, & \alpha^2+2, & \alpha^2+\alpha, & \alpha^2+\alpha+1, & \alpha^2+\alpha+2, & \alpha^2+2\alpha, & \alpha^2+2\alpha+1, & \alpha^2+2\alpha+2 \\
        2\alpha^2, & 2 \alpha^2+1, & 2\alpha^2+2, & 2\alpha^2+\alpha, & 2\alpha^2+\alpha+1, & 2 \alpha^2+\alpha+2, & 2 \alpha^2+2\alpha, & 2 \alpha^2+2\alpha+1, & 2\alpha^2+2\alpha+2.
    \end{array}
    \]}
    
    The polynomial elements of $C$ can be ordered similarly, with the additional constraints that $1<x<x^2$. For illustration purposes, consider the polynomial $q(x) = \alpha x$ and the evaluation point $\beta_3=\alpha^2+\alpha+2$. Then the $27 \times 27$ submatrix corresponding to the entry $q(\beta_3)=q(\alpha^2+\alpha+2)=\alpha^2+2$ can be visualized as shown in Figure~\ref{fig:gf27} (see Appendix). The information contained in this submatrix can be expressed significantly more compactly as $x_2 x_0^2$. The full binary matrix $H_C$ in this case has size $108 \times 19683$, but the matrix $H(x_0,x_1,x_2)$ has size $4 \times 729$.
\end{example}

\begin{example}
    Finally, let $C$ be a $[25,2,24]_{5^2}$ Reed-Solomon code. This means $C$ has all $25$ evaluation points in $\F_{5^2}$. Using the usual lexicographic ordering, the polynomial parity-check matrix $H(x_0,x_1)$ has dimensions $25 \times 25$. The binary parity-check matrix $H_C$ has dimensions $625 \times 625$. The matrix $H(x_0,x_1)$ is given in Figure~\ref{fig:gf25-matrix} (see Appendix).
\end{example}

The structural setup of Proposition~\ref{prop:hqc-matrices} doesn't actually require that the polynomials be from Reed-Solomon codes, only that given any polynomial in the polynomial evaluation code, all possible constant shifts are also in the code. We defined this property to be the field partition property in Definition~\ref{def:fieldpartition}. In fact, any polynomial evaluation code with this property yields a parity-check matrix with hierarchical quasi-cyclic structure.

\begin{corollary}
    Suppose $C$ is a linear $[n, k, d]_q$ polynomial evaluation code with the field partition property over $\mathbb{F}_{p^r}$, $q=p^r$. Then the code $H_C$ formed in Construction \ref{construction:MH} is hierarchical quasi-cyclic with  $r$ levels, and $J_i = np^{r-i}$ and $L_i = p^{rk-i}$. 
\end{corollary}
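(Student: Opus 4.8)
The plan is to observe that the proofs of Proposition~\ref{prop:hqc-matrices} and Theorem~\ref{thm:hqc-codes} never use the full strength of the Reed-Solomon hypothesis: they use only the fact that whenever a polynomial lies in $C$, so do all $p^r$ of its shifts by a constant in $\F_{p^r}$. For a Reed-Solomon code this closure is automatic, but it is exactly the content of the field partition property of Definition~\ref{def:fieldpartition}. So I would run the argument of Theorem~\ref{thm:hqc-codes} essentially verbatim, substituting this property for the Reed-Solomon assumption wherever closure under constant shifts was invoked.

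Concretely, I would first order the elements of $\F_{p^r}$ and the defining polynomials of $C$ in the lexicographic ordering of Definition~\ref{def:pary}, under which the constant term is the least significant digit, so the $p^r$ constant shifts of any single representative form a consecutive block. Since $C$ is $[n,k,d]_q$ with $q=p^r$, we have $|C|=p^{rk}$, and the field partition property partitions these polynomials into $|C|/p^r = p^{r(k-1)}$ parts, each with a zero-constant-term representative $q_\ell(x)$. This is precisely the indexing used in Theorem~\ref{thm:hqc-codes}.

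Next, for each evaluation point $\beta_j$ with $j \in [n]$ and each representative $q_\ell(x)$ with $\ell \in [p^{r(k-1)}]$, I would form the $p^r \times p^r$ block $H_{j,\ell}$ and apply Proposition~\ref{prop:hqc-matrices}. That proposition's proof depends only on the evaluations $q_\ell(\beta_j)+c$ for all $c\in\F_{p^r}$ and their lexicographic positions, both of which the field partition property supplies; hence, writing $q_\ell(\beta_j) = \sum_{i=0}^{r-1} a_i^{(j,\ell)} \alpha^i$, I obtain
\[ H_{j,\ell} = I_{p,a_{r-1}^{(j,\ell)}} \otimes \dotsm \otimes I_{p,a_0^{(j,\ell)}}, \]
a weight-one level-$r$ hierarchical quasi-cyclic block. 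Assembling the blocks and counting exactly as in Theorem~\ref{thm:hqc-codes}, each block contributes level-$i$ lifting degree $p^{r-i}$; with $n$ block rows and $p^{r(k-1)}$ block columns this yields $J_i = n p^{r-i}$ and $L_i = p^{r(k-1)}\cdot p^{r-i} = p^{rk-i}$, and $r$ levels overall.

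The only place the generalization could genuinely fail, and hence the step I would verify most carefully, is the alignment of the constant-shift classes with the lexicographic ordering for an \emph{arbitrary} polynomial evaluation code rather than the full space of degree-$<k$ polynomials. The point to check is that each field partition class, which is an entire size-$p^r$ block of the full lexicographic order, lies wholly inside $C$ and is therefore still a consecutive block once the order is restricted to the (possibly proper) set of code-defining polynomials, with nothing from another class interleaving. Once this is confirmed, Proposition~\ref{prop:hqc-matrices} applies unchanged to every $H_{j,\ell}$ and the conclusion follows.
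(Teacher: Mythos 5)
Your proposal is correct and follows exactly the paper's approach: the paper's proof is the one-line observation that the arguments of Proposition~\ref{prop:hqc-matrices} and Theorem~\ref{thm:hqc-codes} apply verbatim to $H_C$, which is precisely your plan. Your additional check---that each constant-shift class remains a consecutive block in the lexicographic order restricted to $C$---is a worthwhile detail the paper leaves implicit, and it holds because any polynomial lying between two constant shifts of $q_\ell$ in the $p$-ary lexicographic order must agree with them on all non-constant coefficients and is therefore itself in the same class, which the field partition property guarantees lies in $C$.
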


\begin{proof}
    Applying the proof techniques of Proposition \ref{prop:hqc-matrices} and Theorem \ref{thm:hqc-codes} to $H_C$, the result follows.  
\end{proof}

\begin{remark}
    For example, applying Construction \ref{construction:MH} to a $p^r$-ary Reed-Muller code results in an HQC binary code defined by $H_C$. Further, many nonlinear subcodes of Reed-Solomon and Reed-Muller codes have the field partition property 
    , and thus yield additional algebraic constructions of HQC binary codes under Construction \ref{construction:MH}.  
\end{remark}

\begin{example} \label{ex:RM422}
    Let $q=4, m=2, \rho=2$, and consider the Reed-Muller code $C=RM(4, 2, 2)$. The parameters of $C$ are $[16, 6, 8]_4$. The binary matrix $H_C$ is a $64\times4096$ HQC matrix with two levels and polynomial representation $H(x_0, x_1)$ of size $16\times 1024$. 
\end{example}

\subsection{Code parameters} 

In this section, we explore code parameters for $\mathcal{C}(H_C)$. We begin with some results that hold for $q$-ary codes which are not necessarily Reed-Solomon codes. In the case of a trivial code, we can fully characterize the resulting code parameters.

\begin{proposition} If $C$ is a $[2, 2, 1]_q$ trivial code ($C=\F_q^2$) for any $q$, then $\mathcal{C}(H_{C})$  has parameters $[q^2, q^2-2q+1, 4]$. If $C$ is a $[2, 1, 2]_q$  code for any $q$ then $\mathcal{C}(H_{C})=\{\bf{0}\}$.
\end{proposition}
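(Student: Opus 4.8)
The plan is to treat the two cases separately, relying throughout on the structure of $H_C$ from Construction~\ref{construction:MH} in the special case $n=2$: every column of $H_C$ has weight exactly $2$, with a single $1$ in the top block of $q$ rows (recording the first coordinate of the corresponding codeword) and a single $1$ in the bottom block of $q$ rows (recording the second coordinate). In both cases the length $\tilde n = q^k$ is immediate from Proposition~\ref{prop:properties}(1), so the real work lies in computing the rank (to get $\tilde k$) and the minimum distance.

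For the trivial code $C=\F_q^2$ (the first part), I would first compute $\rank(H_C)$. Label the top $q$ rows by the first-coordinate values and the bottom $q$ rows by the second-coordinate values. Since the codewords exhaust $\F_q^2$, the top row indexed by $\alpha$ is the indicator vector of the columns $(\alpha,\ast)$; these $q$ top rows have pairwise disjoint supports, hence are independent, and likewise for the bottom $q$ rows. I would then determine the relations linking the two blocks: evaluating a putative dependence $\sum_s c_s(\text{top}_s)+\sum_t d_t(\text{bot}_t)=\mathbf 0$ on the column $(\alpha_s,\alpha_t)$ gives the condition $c_s+d_t=0$ for all $s,t$ over $\F_2$, which forces all $c_s$ and all $d_t$ to be equal. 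Thus the only nontrivial relation is that the sum of the top rows equals the sum of the bottom rows (both the all-ones vector), so $\rank(H_C)=2q-1$ and $\tilde k = q^2-(2q-1)=(q-1)^2 = q^2-2q+1$.

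For the minimum distance in the first part, I would reinterpret a binary codeword of $\mathcal{C}(H_C)$, that is, a set of columns summing to $\mathbf 0$ over $\F_2$, as an even subgraph of the bipartite graph $G$ on vertex set $\F_q \sqcup \F_q$ whose edges are exactly the codewords of $C$, the column $(a,b)$ being the edge joining $a$ on the left to $b$ on the right. A subset of columns lies in the kernel precisely when every vertex of $G$ is incident to an even number of chosen edges, that is, when the subset lies in the cycle space of $G$; hence $\tilde d$ equals the girth of $G$. For $C=\F_q^2$ every edge is present, so $G=K_{q,q}$, whose girth is $4$, giving $\tilde d = 4$, realized by the even subgraph $\{(a,b),(a,b'),(a',b),(a',b')\}$ with $a\neq a'$ and $b\neq b'$. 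I expect this even-subgraph reformulation to be the main conceptual step; once it is in place the distance is just a girth computation, and it also transparently explains why no one, two, or three columns can be dependent.

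For the second part, where $C$ is a $[2,1,2]_q$ code generated by some $(g_1,g_2)$ with $g_1,g_2\neq 0$, I would note that the $q$ codewords are $c\,(g_1,g_2)$ for $c\in\F_q$, and that multiplication by $g_1$ permutes $\F_q$, so $c\mapsto cg_1$ hits each field element exactly once. Consequently the top $q\times q$ block of $H_C$ is a permutation matrix and has full rank $q$ over $\F_2$. Since $H_C$ has only $q=\tilde n$ columns, it has full column rank, its kernel is trivial, and $\mathcal{C}(H_C)=\{\mathbf 0\}$.
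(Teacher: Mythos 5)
Your proof is correct, and it diverges from the paper's in two worthwhile ways. For the rank in the first part, you identify the left kernel directly: evaluating a putative dependence on the column $(\alpha_s,\alpha_t)$ gives $c_s+d_t=0$ for all $s,t$, forcing all coefficients equal, hence $\rank(H_C)=2q-1$. The paper reaches the same conclusion by writing $H_C$ explicitly in block form (identity blocks on top, shifted unit-vector blocks below) and tracing which rows a dependence must successively drag in; your pointwise argument is shorter and less tied to a particular ordering of the codewords. For the minimum distance, the paper splits the two bounds: it cites Proposition~\ref{prop:dmin-LDPC} (with $\gamma=2$ block rows) to get $\tilde d\geq 4$, then exhibits the four columns $(\mathbf{e}_1,\mathbf{e}_1),(\mathbf{e}_2,\mathbf{e}_2),(\mathbf{e}_1,\mathbf{e}_2),(\mathbf{e}_2,\mathbf{e}_1)$ summing to zero. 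Your cycle-space reformulation\textemdash columns of $H_C$ are edges of $K_{q,q}$, kernel vectors are even subgraphs, so $\tilde d$ equals the girth $4$\textemdash proves both bounds at once, is self-contained (no appeal to the external LDPC bound), and explains structurally why no one, two, or three columns can be dependent; your witness $4$-cycle is exactly the paper's four columns. For the second part, the paper merely asserts that $\rank(H_C)=q$; your observation that the top $q\times q$ block is a permutation matrix (since $c\mapsto cg_1$ is a bijection for $g_1\neq 0$) supplies the detail the paper omits. The trade-off: the paper's route leans on machinery it has already set up and matches the block-matrix style used in its other arguments, while yours is more elementary and arguably more transparent, though the graph-theoretic dictionary is specific to the $n=2$ case and would not extend as directly to longer starting codes.
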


\begin{proof} Suppose $C$ is a $[2, 2, 1]_q$ code. 
    The matrix $H_{C}$ is a binary $2q\times q^2$ matrix, so the code length is $q^2$ and the dimension is at least $q^2-2q$. Notice that $\text{rank}(H_C)\leq 2q-1$, since the sum of rows $1$ through $q$ is the all-ones row vector, as is the sum of rows $q+1$ through $2q$. We show that the rank of $H_C$ equals $2q-1$.  The matrix $H_C$ can be arranged so that it has the form: 
    \[ H_C=\begin{pmatrix} B_{11} & B_{12} & B_{13}& \cdots & B_{1q} \\ 
                        B_{21} & B_{22} & B_{23}& \cdots & B_{2q} 
    \end{pmatrix},  \]
    where matrices $B_{11}, \ldots, B_{1q}, B_{21}$ are $q\times q$ identity matrices and $B_{ij}$ for $i=2, j=2,\ldots, q$ are as follows: 
    \[ B_{22} = \begin{pmatrix}
        \mathbf{u}_{q} \\ \mathbf{u}_1 \\ \mathbf{u}_2 \\ \vdots \\\mathbf{u}_{q-2} \\ \mathbf{u}_{q-1} 
    \end{pmatrix}, B_{23} =\begin{pmatrix}
        \mathbf{u}_{q-1} \\ \mathbf{u}_q \\ \mathbf{u}_1 \\ \vdots \\\mathbf{u}_{q-3}\\ \mathbf{u}_{q-2} 
    \end{pmatrix}, \cdots, B_{2q} = \begin{pmatrix}
        \mathbf{u}_{2} \\ \mathbf{u}_3 \\ \mathbf{u}_4 \\ \vdots \\\mathbf{u}_q\\ \mathbf{u}_{1} 
    \end{pmatrix}. \]

Let the rows of $H_C$ be denoted $\mathbf{r}_i$, for $i\in [2q]$, and suppose 
\begin{equation*}
   \sum_{i\in I} \mathbf{r}_i=\mathbf{0}\label{eq:sum-rows}
\end{equation*}
for some set $\varnothing \neq I\subseteq [2q]$.
Suppose row 1 of $H_C$, $\mathbf{r}_1=(\mathbf{u}_1, \mathbf{u}_1, \ldots, \mathbf{u}_1)$, is in $I$. Because of the structure of $B_{2j}$ for $j=1, \ldots, q$, and the location of $\mathbf{u}_1$ in each block, this implies that rows $q+1, q+2, \ldots, 2q$ are in $I$. An analogous implication follows if any other row from 1 to $q$ is in $I$. Suppose next that row $q+1$, $\mathbf{r}_{q+1}=(\mathbf{u}_1, \mathbf{u}_q, \mathbf{u}_{q-1}, \ldots, \mathbf{u}_2)$, is in $I$. In order for the rows indexed by $I$ to sum to $\bf{0}$, 
rows 1, $q$, $q-1$, 
$\ldots, 2$ must also be included in $I$. An analogous result holds for any other row indexed by $q+2, \ldots, 2q$. 
Thus $I=[2q]$ and any linear dependence involves $2q$ rows; therefore $\text{rank}(H_C)=2q-1$.  

    By Proposition~\ref{prop:dmin-LDPC}, $\tilde{d}\geq 4$. Consider the following columns $\mathbf{c}_1, \mathbf{c}_2, \mathbf{c}_3, \mathbf{c}_4$ of $H_C$: 
    \[\mathbf{c}_1=\begin{pmatrix}\mathbf{e}_1\\ \mathbf{e}_1\end{pmatrix}, \mathbf{c}_2=\begin{pmatrix}\mathbf{e}_2\\ \mathbf{e}_2\end{pmatrix}, \mathbf{c}_3=\begin{pmatrix}\mathbf{e}_1\\ \mathbf{e}_2\end{pmatrix}, \mathbf{c}_4=\begin{pmatrix}\mathbf{e}_2\\ \mathbf{e}_1\end{pmatrix}, \]
    These columns satisfy 
    $\mathbf{c}_1 + \mathbf{c}_2+\mathbf{c}_3 +\mathbf{c}_4\equiv \mathbf{0},$ 
    and thus $\tilde{d}=4$. 

    Finally, suppose $C$ is a $[2, 1, 2]_q$  code. Then the rank of $H_C$ is $q$, so $\C(H_C)=\{\bf{0}\}$. 
\end{proof}

Next we obtain an upper bound on the rank of $H_C$, and thus 
a lower bound on the dimension of the code with parity-check matrix $H_C$. An analogous bound on the code dimension was given in \cite{fossorier2004quasicyclic} (page 1) for parity-check codes with circulant permutation submatrix blocks. 

\begin{proposition} \label{prop:dim-lower-bd}
    Let $C$ be an $[n, k, d]_q$ linear code. The binary code with parity-check matrix $H_C$ has dimension at least $q^k - (nq-(n-1))$.
\end{proposition}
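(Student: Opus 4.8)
The plan is to convert the dimension bound into a rank bound and then exhibit enough linear dependencies among the rows of $H_C$. Since $H_C$ is a binary $nq \times q^k$ matrix (Proposition~\ref{prop:properties}(1)) and $\mathcal{C}(H_C)$ is its nullspace, the rank-nullity theorem gives $\dim \mathcal{C}(H_C) = q^k - \rank(H_C)$. Thus it suffices to prove $\rank(H_C) \le nq - (n-1)$, i.e.\ to produce $n-1$ independent linear relations among the $nq$ rows of $H_C$.

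First I would organize the rows of $H_C$ into $n$ blocks of $q$ rows, where block $j$ corresponds to the $j$th coordinate (evaluation point) of $C$ and its $q$ rows correspond to the field elements $\beta_1, \dotsc, \beta_q$. By construction, the column of $H_C$ indexed by a codeword $\mathbf{c}$ has, within block $j$, a single $1$ located in the row labeled by the $j$th coordinate of $\mathbf{c}$. Consequently, summing all $q$ rows of block $j$ over $\F_2$ yields the all-ones row vector $\mathbf{1}$ of length $q^k$, and this holds for every $j \in [n]$ regardless of whether any coordinate of $C$ is fixed at $0$ (if coordinate $j$ is fixed at $0$, one row of block $j$ is all-ones and the rest are all-zero, so the sum is still $\mathbf{1}$).

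Next, because each block-sum equals the same vector $\mathbf{1}$, the differences give genuine dependencies: over $\F_2$, the sum of the rows of block $1$ plus the sum of the rows of block $j$ equals $\0$ for each $j \in \{2, \dotsc, n\}$. Writing $\mathbf{g}_j \in \F_2^{nq}$ for the indicator of block $j$, these relations are $\mathbf{g}_1 + \mathbf{g}_j$ for $j = 2, \dotsc, n$. Since the $\mathbf{g}_j$ have pairwise disjoint supports they are linearly independent, and hence so are the $n-1$ vectors $\mathbf{g}_1 + \mathbf{g}_j$. This produces an $(n-1)$-dimensional space of linear dependencies among the rows, so $\rank(H_C) \le nq - (n-1)$, and therefore $\dim \mathcal{C}(H_C) \ge q^k - (nq - (n-1))$.

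The content of the argument lies entirely in the block structure; the only point requiring a moment's care is confirming that the claimed $n-1$ relations are linearly independent (rather than merely nonzero), which follows immediately from the disjointness of the block supports. I do not expect genuine difficulty here, since we need only a lower bound on the dimension and hence only need to exhibit $n-1$ independent dependencies, not to compute the rank exactly, which would require finer information about the code structure of $C$.
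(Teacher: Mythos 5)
Your proof is correct and follows essentially the same route as the paper: the paper also bounds $\rank(H_C)\le nq-(n-1)$ by observing that the sum of each block of $q$ rows is the all-ones vector, from which the $n-1$ independent dependencies follow. Your write-up simply makes explicit the two details the paper leaves implicit (the case of a coordinate fixed at $0$, and the linear independence of the relations $\mathbf{g}_1+\mathbf{g}_j$), both of which you handle correctly.
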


\begin{proof}
    The rank of $H_C$ is bounded above by $nq-(n-1)$ since the sum of each block of $q$ rows is the all-ones row vector. Thus the rank of the code with parity-check matrix $H_C$ is at least $q^k-(nq-(n-1))$. 
\end{proof}

The minimum distance bounds in Propositions \ref{prop:mindis} (with $\gamma=p^{r-1}n$)  and \ref{prop:dmin-LDPC} (with $D=\lfloor\frac{n-1}{n-d}\rfloor$) can be applied to give bounds on $\tilde{d}$. In general, these bounds are not tight (see, e.g., Table \ref{tab:parameters}). 

\begin{corollary} \label{cor:min-dist}
    Let $C$ be an $[n, k, d]_{p^r}$ linear code. The binary code with parity-check matrix $H_C$ has minimum distance satisfying: 
    \[ \tilde{d}\geq \begin{cases} p^{r-1}n+2 &\text{ for } p^{r-1}n  \text{ even} \\ 
    \lfloor\frac{n-1}{n-d}\rfloor+2 & \text{ for any } n, d.
    \end{cases} \]
\end{corollary}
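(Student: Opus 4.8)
The plan is to recognize that Corollary~\ref{cor:min-dist} is essentially a bookkeeping exercise: it simply specializes two minimum distance bounds already established in the excerpt to the matrix $H_C$ arising from Construction~\ref{construction:MH}. Thus the whole proof amounts to checking that $H_C$ satisfies the hypotheses of each cited proposition and then substituting the correct parameter values. First I would handle the second branch, $\tilde{d} \geq \lfloor \frac{n-1}{n-d}\rfloor + 2$. For this I would invoke the fact, recalled in the paragraph immediately before Construction~\ref{construction:MH}, that the disjunct value of $H_C$ for an $[n,k,d]_q$ code $C$ is exactly $D = \lfloor \frac{n-1}{n-d}\rfloor$. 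Then Proposition~\ref{prop:mindis}, which states that a $D$-disjunct matrix $M$ yields a code $C(M)$ with $d_{\text{min}} \geq D+2$, applies directly to $M = H_C$ and gives the claimed bound for all $n,d$.

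Next I would treat the first branch, $\tilde{d} \geq p^{r-1}n + 2$ when $p^{r-1}n$ is even. Here the relevant tool is Proposition~\ref{prop:dmin-LDPC} (Xu et al.), which asserts $d \geq \gamma + 2$ whenever a binary parity-check matrix is assembled from $\gamma$ rows and $\rho$ columns of block permutation matrices and $\gamma$ is even. The key observation is that Theorem~\ref{thm:hqc-codes} (and Proposition~\ref{prop:hqc-matrices}) shows every $p^r \times p^r$ block $H_{j,\ell}$ of $H_C$ is a Kronecker product of shifted identity matrices, hence itself a permutation matrix. Since there are $n$ block rows, each expanded into $p^{r-1}$ rows of such $p\times p$-block permutation structure—or, more directly, since $H_C$ consists of $np^{r-1}$ rows of $p\times p$ block permutation matrices after the finest-level decomposition—the number of block-permutation rows is $\gamma = p^{r-1}n$. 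Substituting this $\gamma$ into Proposition~\ref{prop:dmin-LDPC} yields $\tilde{d} \geq p^{r-1}n + 2$ precisely when $p^{r-1}n$ is even, matching the stated parity condition.

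The step I expect to require the most care is correctly identifying the value of $\gamma$ for the application of Proposition~\ref{prop:dmin-LDPC}. The subtlety is that $H_C$ admits several block decompositions (into $q\times q$ blocks as in Theorem~\ref{thm:hqc-codes}, or into $p\times p$ circulant blocks as in Theorem~\ref{thm:qc}), and one must select the decomposition in which every block is genuinely a permutation matrix so that the hypothesis ``block permutation matrices'' of Proposition~\ref{prop:dmin-LDPC} is literally satisfied. Using Proposition~\ref{prop:hqc-matrices}, each finest-level $p\times p$ block is a shifted identity $I_{p,a}$, which is a permutation matrix; counting the block rows in this finest decomposition gives $\gamma = np^{r-1}$, consistent with the parenthetical remark preceding the corollary. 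Once this identification is justified, both branches follow immediately by substitution, and I would simply note, as the statement already flags, that these bounds are generally not tight, referring the reader to Table~\ref{tab:parameters}.
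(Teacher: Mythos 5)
Your second branch is exactly the paper's argument and is sound: the remark following Construction~\ref{construction:MH} supplies the disjunct value $D=\lfloor\frac{n-1}{n-d}\rfloor$ of $H_C$, and Proposition~\ref{prop:mindis} converts this into $\tilde{d}\geq D+2$. Your first branch also follows the paper's intended route---the paper's entire proof is the single sentence preceding the corollary, which cites the same two propositions with the same substitutions (its parentheticals attach $\gamma$ to Proposition~\ref{prop:mindis} and $D$ to Proposition~\ref{prop:dmin-LDPC}, but that is merely a transposition slip).

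However, the step you yourself flagged as delicate is a genuine gap, and the justification you supply for it is false. It is not true that in the finest-level decomposition every $p\times p$ block of $H_C$ is a shifted identity $I_{p,a}$. By Proposition~\ref{prop:hqc-matrices}, each $q\times q$ block of $H_C$ equals $I_{p,a_{r-1}}\otimes\cdots\otimes I_{p,a_0}$; for $r\geq 2$, viewed as a $p\times p$ array of $p^{r-1}\times p^{r-1}$ blocks, this matrix has exactly one nonzero block per block row and all remaining blocks are \emph{zero} blocks (visible in Equation~\ref{eq:matrix} of Example~\ref{ex:construction}, where many $2\times 2$ blocks vanish). A cleaner way to see the obstruction: in any decomposition of $H_C$ into a $\gamma\times\rho$ array of block permutation matrices, every column of $H_C$ must have weight exactly $\gamma$; but by Proposition~\ref{prop:properties} each column of $H_C$ has weight $n$, so the hypothesis of Proposition~\ref{prop:dmin-LDPC} forces $\gamma=n$, and no decomposition with $\gamma=p^{r-1}n>n$ can exist when $r\geq 2$. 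The legitimate conclusion is therefore $\tilde{d}\geq n+2$ for $n$ even, not $\tilde{d}\geq p^{r-1}n+2$. Indeed the stronger bound is false for $r\geq 2$, as the paper's own results show: Corollary~\ref{cor:even} applied to the $[7,2,6]_8$ Reed-Solomon code gives $\tilde{d}=8$, whereas $p^{r-1}n+2=30$; likewise Table~\ref{tab:parameters} lists the $[4,2]_4$ code as producing a $[16,7,6]$ code against a claimed bound of $10$. (For $r=1$ one has $p^{r-1}n=n$, so prime-field examples hide the discrepancy.) In short, you have faithfully reproduced the substitution $\gamma=p^{r-1}n$ made in the paper, but that identification---in the paper and in your write-up---is precisely where the proof fails; the defensible version of the first branch reads $\tilde{d}\geq n+2$ for $n$ even.
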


\begin{example}
    The parameters of $\mathcal{C}(H_C)$ from  Example~\ref{ex:RM422} are $[4096, \tilde{k}, \tilde{d}]_2$, where $\tilde{k}\geq 4047$, $\tilde{d}\geq 34$, using Corollary \ref{cor:min-dist}.
\end{example}

When working with even characteristic, we can give an exact minimum distance in some cases.

\begin{theorem} \label{thm:even} Let $q=2^r$,  and consider an $[n, 2, n-1]$ Reed-Solomon code $C$ over $\mathbb{F}_q$, with $2< n<q$ and with evaluation points in $\F_q^*$. Then the code $\C(H_C)$ has a codeword of weight $q$ and thus $\tilde{d}\leq q$. 
\end{theorem}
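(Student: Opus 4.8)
The plan is to exhibit an explicit weight-$q$ codeword by selecting $q$ columns of $H_C$ whose sum over $\F_2$ is $\mathbf{0}$. Recall that a binary vector $\mathbf{x}$ lies in $\mathcal{C}(H_C)$ exactly when the columns of $H_C$ indexed by its support sum to zero, so producing a codeword of weight $w$ is the same as exhibiting $w$ distinct columns summing to $\mathbf{0}$. The columns of $H_C$ are indexed by the degree-$\le 1$ polynomials $a+bx \in C$, and within a column the block attached to an evaluation point $\beta_j$ is the unit vector $\mathbf{e}_i$ recording the value $(a+bx)(\beta_j)$. Hence a set $S$ of polynomials gives a vanishing column-sum precisely when, for every evaluation point $\beta_j$, each element of $\F_q$ occurs an \emph{even} number of times in the multiset $\{p(\beta_j): p\in S\}$.

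I would then take the family $S=\{\,p_b(x)=b^2+bx : b\in\F_q\,\}$. These are $q$ distinct polynomials of degree at most $1$, hence $q$ distinct codewords of $C$ (the evaluation map is injective because $n>2\ge k$), and therefore correspond to $q$ distinct columns of $H_C$; the resulting codeword will have weight exactly $q$. It remains to verify the even-multiplicity condition at each $\beta_j$.

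The heart of the argument is a pairing computation. Fixing $\beta_j\in\F_q^{*}$ and working in characteristic $2$,
\[ p_b(\beta_j)=p_{b'}(\beta_j) \iff (b+b')^2+(b+b')\beta_j=0 \iff (b+b')\bigl((b+b')+\beta_j\bigr)=0, \]
so $p_b(\beta_j)=p_{b'}(\beta_j)$ exactly when $b'=b$ or $b'=b+\beta_j$. Since $\beta_j\neq 0$, the translation $b\mapsto b+\beta_j$ is a fixed-point-free involution on $\F_q$, partitioning the $q$ choices of $b$ into $q/2$ pairs that produce identical evaluations; thus every value attained by $\{p_b(\beta_j):b\in\F_q\}$ is attained an even number of times, giving $\sum_{b\in\F_q}\mathbf{e}_{p_b(\beta_j)}=\mathbf{0}$ for each $j$ and hence a vanishing total column-sum. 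The main obstacle is discovering the right intercept–slope relation: taking the intercept to be the square of the slope is exactly what lets $b^2+b'^2$ collapse to $(b+b')^2$, reducing the coincidence condition to the single shift $b'=b+\beta_j$ so that one family pairs up simultaneously at \emph{all} evaluation points. The hypothesis that the evaluation points lie in $\F_q^{*}$ is precisely what makes the involution fixed-point-free (at $\beta_j=0$ the map $b\mapsto b^2$ is a bijection and all multiplicities would be odd), while $2<n<q$ guarantees a genuine dimension-$2$ Reed–Solomon code with injective evaluation. This yields a weight-$q$ codeword and therefore $\tilde{d}\le q$.
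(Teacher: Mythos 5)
Your proposal is correct and takes essentially the same approach as the paper: your witness family $\{b^2+bx : b\in\F_q\}$ is exactly the paper's set $S=\{\beta(x-\beta) : \beta\in\F_q\}$ rewritten in characteristic $2$. The only difference is cosmetic---the paper verifies the even-multiplicity condition by noting the quadratic $\beta^2+\beta a+b=0$ has $0$ or $2$ roots when $a\neq 0$ (treating the value $0$ separately), while you package the same fact as the fixed-point-free involution $b\mapsto b+\beta_j$, which handles all values, including $0$, uniformly.
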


\begin{proof} Since we have the lower bound $\tilde{d}\geq D+2=n+1$, we only need to demonstrate a codeword of weight $q=n+1$ in the nullspace of $H_C$. 
Because any pair of  $[n, 2, n-1]$ RS codes  over $\mathbb{F}_q$ are permutation equivalent,  we assume without loss of generality that $C$ has  evaluation points from  $\F_q^*$. 
Consider the set of polynomials   
$S= \{\beta(x-\beta)\mid \beta\in \F_q \}$.
Define the multiset $E_S=\{(i,j): p(i)=j \text{ for some } p\in S\}\subseteq \F_q^*\times \F_q$.  
    Notice that $(i,0)$ appears twice in $E_S$ for each $i\in \F_q^*$ since the zero polynomial has all zeroes and each of the other $q-1$ nonzero polynomials has a unique root. 
Let $(a, b)\in E_S$.
Consider the polynomial $\beta(x-\beta)$, for $\beta\neq 0$. Then 
\[ \beta(a-\beta)=b, \text{ if and only if } \beta^2+\beta a+b=0 \text{ in } \F_q.\]

For $a\neq 0$, the quadratic in variable $\beta$ has either 0 solutions or 2 solutions in terms of $a, b$ (see, e.g., \cite{cox2012galois}). Thus for every pair $(a,b)$, it must appear either zero or two times in $E_S$. This implies the columns of $H_C$ indexed by $S$ have an even number of 1's in each row. 
Thus the set $S$ yields a codeword in the nullspace of $H_{C}$. 
\end{proof}

\begin{corollary} \label{cor:even}
    Let $q=2^r$, and consider an $[q-1, 2, q-2]$ RS code $C$ over $\mathbb{F}_q$, with evaluation points in $\F_q^*$. Then the code $\C(H_C)$ has parameters: 
    \[ [q^2, \tilde{k}\geq 2q-2, q]. \]
\end{corollary}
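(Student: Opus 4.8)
The plan is to obtain all three parameters by specializing results already established in this section, so that the corollary reduces to a bookkeeping consequence of Theorem~\ref{thm:even} together with Propositions~\ref{prop:mindis} and \ref{prop:dim-lower-bd}. First I would record the length and confirm that the hypotheses of Theorem~\ref{thm:even} are met. Since $C$ is a $[q-1,2,q-2]_q$ Reed-Solomon code, Proposition~\ref{prop:properties}(1) gives that $H_C$ is an $nq \times q^k = (q-1)q \times q^2$ binary matrix, so $\C(H_C)$ has length $q^2$. Because $q = 2^r$ and Theorem~\ref{thm:even} requires $2 < n < q$, I would note that $n = q-1$ forces $q \geq 4$, under which $2 < q-1 < q$ holds and the evaluation points may be taken in $\F_q^*$.

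For the dimension I would apply Proposition~\ref{prop:dim-lower-bd} directly. It gives that the dimension is at least $q^k - (nq - (n-1))$. Substituting $k = 2$ and $n = q-1$ yields $nq - (n-1) = (q-1)q - (q-2) = q^2 - 2q + 2$, and hence $\tilde{k} \geq q^2 - (q^2 - 2q + 2) = 2q - 2$, exactly as claimed.

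The minimum distance I would obtain by squeezing from both sides. The disjunct value of $H_C$ is $D = \lfloor \frac{n-1}{n-d} \rfloor = \lfloor \frac{q-2}{1} \rfloor = q - 2$, so Proposition~\ref{prop:mindis} yields the lower bound $\tilde{d} \geq D + 2 = q$. On the other side, Theorem~\ref{thm:even} exhibits an explicit codeword of weight $q$ in the nullspace of $H_C$, giving $\tilde{d} \leq q$. Combining the two bounds forces $\tilde{d} = q$, which completes the parameter computation $[q^2,\, \tilde{k}\geq 2q-2,\, q]$.

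There is no genuine obstacle here beyond the arithmetic substitutions and the range check on $q$: the content is entirely contained in Theorem~\ref{thm:even}. The only point worth flagging is that the lower bound $D+2$ from the disjunct property and the upper bound from the weight-$q$ codeword coincide exactly at $q$, which is precisely what pins down the minimum distance.
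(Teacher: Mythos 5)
Your proposal is correct and follows essentially the same route as the paper's own proof: Proposition~\ref{prop:dim-lower-bd} for the dimension bound, Proposition~\ref{prop:mindis} (with $D = q-2$, so $D+2 = q$) for the distance lower bound, and the weight-$q$ codeword from Theorem~\ref{thm:even} for the matching upper bound. Your added details\textemdash the explicit length computation, the arithmetic $q^2 - ((q-1)q - (q-2)) = 2q-2$, and the check that $n = q-1$ satisfies $2 < n < q$\textemdash are just elaborations of what the paper leaves implicit.
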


\begin{proof}
    The lower bound on $\tilde{k}$ is an application of Proposition~\ref{prop:dim-lower-bd}. Since we have the lower bound $\tilde{d}\geq D+2=q$ from Proposition~\ref{prop:mindis}, Theorem~\ref{thm:even} demonstrates a codeword of weight $q$ in the nullspace of $H_C$, and thus $\tilde{d}=q$.
\end{proof}

\begin{example}
    Consider the $[3,2,2]_4$ Reed-Solomon code $C$ as given in Example~\ref{ex:construction}. The bound in Corollary~\ref{cor:even} gives that $\tilde{k} \geq 6$, while $\tilde{k}=8$ (see Table \ref{tab:parameters}). In general we observe that the dimension bound is not tight in the case of even $q$.
\end{example}

\section{Equivalence and non-equivalence} 
\label{sec:equivalence}

We begin this section with a discussion of the non-equivalence of the codes from Constructions \ref{construction:MH} and \ref{construction:LDPC}, with an explicit correspondence in the cases when the starting codes are related. While some of these codes share a common starting point of dimension two MDS $q$-ary codes, not only are the the final parity-check matrices  different, the codes themselves are not equivalent. We formalize the differences and demonstrate a process to go from one construction to the other (in the specific case when $k=2$) via column vector substitutions and other modifications.

\begin{proposition} 
Constructions~\ref{construction:MH} and \ref{construction:LDPC} produce nonisomorphic codes with different parameters and different quasi-cyclic indices. 
However, using an $[n, 2, n-1]_q$ ($n\leq q$)  Reed-Solomon code $C_{\text{RS}}$ as the starting point,  the equivalence classes of the codes $\C(H_{C_{\text{RS}}})$ from Construction~\ref{construction:MH}  are in one-to-one correspondence with the equivalence classes of the codes $\C(H_{\text{II}})$ from Construction~\ref{construction:LDPC}.   
\end{proposition}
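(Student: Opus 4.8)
The plan is to prove the two halves separately, since the first (non-isomorphism) is a matter of comparing invariants and the second (the correspondence) requires an explicit, reversible transformation of parity-check data. For the first half, I would simply compare lengths: for an $[n,2,n-1]_q$ starting code, $H_{C_{\text{RS}}}$ is an $nq \times q^2$ matrix, so $\C(H_{C_{\text{RS}}})$ has length $q^2$, whereas $H_{\text{II}}$ is the transpose of the $(q+1)(q-1)\times n(q-1)$ matrix $H_{\text{disp}}(W_{\text{MDS}})$, so $\C(H_{\text{II}})$ has length $(q+1)(q-1)=q^2-1$. Isomorphic codes have equal length, so the two codes are never isomorphic. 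The quasi-cyclic indices also differ: $\C(H_{C_{\text{RS}}})$ has index $p$ by Theorem~\ref{thm:qc}, while $\C(H_{\text{II}})$ has index $q-1$ by design, and $p\nmid q-1$; together with the length and dimension computations this records the ``different parameters and different indices'' claim.

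For the correspondence I would first set up a dictionary identifying the rows and columns of both matrices with intrinsic data of $C_{\text{RS}}$, fixing a primitive $\alpha$. In $H_{C_{\text{RS}}}$ the $q^2$ columns are indexed by the codewords $\mathbf{c}_{a,b}=(a+b\beta_1,\dots,a+b\beta_n)$, the $nq$ rows split into $n$ blocks of size $q$, and the block-$j$, field-element-$\gamma$ entry is the indicator $[\,a+b\beta_j=\gamma\,]$. In $H_{\text{II}}$ the $(q+1)(q-1)$ columns are indexed by the nonzero codewords grouped into the scalar-multiplication classes $W_0,\dots,W_q$, the $n(q-1)$ rows split into $n$ position-blocks of size $q-1$, and each dispersion block is the circulant recording the coordinate $\log_\alpha c_j$ when $c_j\neq0$ and the zero block when $c_j=0$. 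Both matrices thus encode the same assignment $(\text{codeword},\text{position})\mapsto c_j$; the only differences are that $H_{C_{\text{RS}}}$ additionally carries the single zero-codeword column $(\mathbf{e}_1,\dots,\mathbf{e}_1)$ and, in each block, a zero-value row, and that one construction labels the $q-1$ nonzero values by the value itself and the other by its discrete logarithm. With this dictionary in hand, the process is to delete the zero-codeword column and the $n$ zero-value rows (taking the row count from $nq$ down to $n(q-1)$ and the column count from $q^2$ down to $q^2-1$, exactly matching $H_{\text{II}}$), then apply the fixed permutation $\gamma\mapsto\log_\alpha\gamma$ to the remaining rows of each block together with the permutation that groups the codeword-columns by scalar class; I would verify that the result is $H_{\text{disp}}(W_{\text{MDS}})^{T}=H_{\text{II}}$ up to these row/column permutations, and that every step is canonically invertible, so the same recipe run backwards reconstructs $H_{C_{\text{RS}}}$ from $H_{\text{II}}$. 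Finally I would check that the recipe is equivariant under equivalences of $C_{\text{RS}}$ (a monomial map on coordinates together with a relabeling of evaluation points acts compatibly on the blocks of both matrices), so that the assignment $[\C(H_{C_{\text{RS}}})]\mapsto[\C(H_{\text{II}})]$ is a well-defined bijection on equivalence classes.

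The main obstacle is the bookkeeping at the heart of the deletion-and-relabeling step: reconciling the additive, value-indexed unit-vector encoding of Construction~\ref{construction:MH} with the multiplicative, circulant (discrete-logarithm) encoding of Construction~\ref{construction:LDPC}. Concretely, within one scalar-multiplication class the $q-1$ separate columns $\mathbf{c}_{\lambda a,\lambda b}$ of $H_{C_{\text{RS}}}$ satisfy $\lambda c_j=\alpha^{\log_\alpha c_j+\log_\alpha\lambda}$, so after relabeling rows by $\log_\alpha$ and ordering these columns by $\log_\alpha\lambda$ they must be shown to assemble into precisely the circulant permutation matrix that the dispersion attaches to that class; since $\F_q^{*}=\langle\alpha\rangle$ is cyclic this is exactly a cyclic shift structure, but confirming it leaves no residual discrepancy is delicate because the two matrices also differ in a deleted row per block and one deleted column. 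Once this value-versus-logarithm matching is established and shown to be invertible and equivariant under $C_{\text{RS}}$-equivalence, the one-to-one correspondence of equivalence classes follows.
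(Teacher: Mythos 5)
Your proposal is correct and takes essentially the same route as the paper: non-isomorphism follows from the length/parameter mismatch ($q^2$ versus $q^2-1$) and the differing quasi-cyclic indices ($p$ versus $q-1$), and the correspondence is exactly the paper's dictionary that adds/deletes the zero-codeword column and the value-$0$ rows while exchanging the value-indexed unit-vector encoding for the discrete-logarithm circulant encoding. The paper merely states this transformation in the opposite direction (substitutions $[\mathbf{0}]_{q-1}\to[\mathbf{e}_1]_q$, $[\mathbf{e}_i]_{q-1}\to[\mathbf{e}_{i+1}]_q$ turning $H_{\text{II}}$ into a matrix permutation-equivalent to $H_{C_{\text{RS}}}$, then inverting), so the content is identical.
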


\begin{proof} Suppose $q=p^r$, where $p$ is a prime. The code constructions are  distinct since $H_{C_{\text{RS}}}$ is an $nq\times q^2$ matrix with row weight $q$, column weight $n$, and quasi-cyclic index $p$, as well as HQC structure with $r$ levels. The parity-check matrix $H_{\text{II}}$ is an $n(q-1)\times (q^2-1)$ matrix with row weight $q$, column weight either $n$ or $(n-1)$, and quasi-cyclic index $(q-1)$ (without HQC structure). Therefore these code families are not isomorphic or permutation equivalent.

We now demonstrate a correspondence between the code families, for the special case of $k=2$ in Construction \ref{construction:MH}. Let $\F_q=\{0, 1, \alpha, \ldots, \alpha^{q-2}\}$ be a finite field. Start by ordering the polynomials that determine the codewords of $C$ as follows: 
 $\mathcal{P}_{<2}(x)=$\begin{align*} \{&x,\alpha x, \ldots, \alpha^{q-2}x, \\ 
&(x-1), \alpha(x-1), \ldots, \alpha^{q-2}(x-1), \\
&(x-\alpha), \alpha(x-\alpha), \ldots, \alpha^{q-2}(x-\alpha), \\
&\vdots \\
&(x-\alpha^{q-2}), \alpha(x-\alpha^{q-2}), \ldots, \alpha^{q-2}(x-\alpha^{q-2}), \\
&1, \alpha, \ldots, \alpha^{q-2},\\
&0\}.\end{align*} 
The first polynomial of  row $i$ for $i=0, \ldots, q$ yields a codeword that is a representative of its class $W_i$ from Construction~\ref{construction:LDPC}. Using the above ordering of $\F_q$ elements as the evaluation point ordering for Reed-Solomon codewords\footnote{Notice that this is not the $p$-ary lexicographic ordering we used in previous Sections.}, we have that $(\textbf{w}_0)_0=0$ and $(\textbf{w}_0)_j=\alpha^{j-1}$, for $j=1, \ldots, q-2$. When the field element $(\textbf{w}_i)_j$ is replaced by its $(q-1)$-fold matrix dispersion, and then the matrix transpose $H_{\text{II}}=H_{\text{disp}}(W_{\text{MDS}})^T$ is considered, each column of the $(q-1)\times (q-1)$ matrix dispersion block for $(\textbf{w}_0)_j$ records the value of each polynomial in $W_1$ at field element $\alpha^{j-1}$, respectively. An analogous statement holds for each $W_i$, $i=1,\ldots, q$. To distinguish the length of column unit vectors, we will use notation $[\textbf{e}_i]_q$ for the length-$q$ $i$th standard basis vector, for example. Making the following column vector substitutions and adding a length $nq$ column for the $p(x)=0$ polynomial, yields a permutation-equivalent code to $\mathcal{C}(H_{C_{RS}})$. 
\begin{align*}
    [\textbf{0}]_{q-1}\to &[\textbf{e}_1]_q \\
    [\textbf{e}_1]_{q-1} \to & [\textbf{e}_2]_q\\
    \vdots & \\ 
    [\textbf{e}_{q-1}]_{q-1} \to & [\textbf{e}_q]_q
\end{align*}
Doing the reverse substitutions, deleting the $0$ polynomial column, and permuting the column and row blocks, would take a matrix $H_{C_{\text{RS}}}$ to a corresponding matrix $H_{\text{II}}$. 
\end{proof}

In general, the above correspondence could serve as a useful framework to potentially obtain hierarchical quasi-cyclic families of codes from other algebraic LDPC code families, but this is an open line of inquiry.

Next we address notions of code equivalence for codes from Construction~\ref{construction:MH}, given some equivalence of the starting codes.

Recall that two codes $C_1$ and $C_2$ of length $n$ are \textit{permutation equivalent} if there exists a permutation $\pi \in S_n$ such that $\pi$ acting on the coordinates of $C_1$ is $C_2$.

Two codes $C_1$ and $C_2$ of length $n$ over $\F_q$ are \textit{monomially equivalent} if, given generator matrix $G_1$ for $C_1$, there exists a monomial matrix $M$ such that $G_1 M$ is a generator matrix for $C_2$. A \textit{monomial matrix} is a matrix $M$ that can be decomposed as $DP$, where $D$ is a diagonal matrix with entries in $\F_q$ and $P$ is a permutation matrix. Note that in the case of binary codes, monomial and permutation equivalence are the same.

\begin{remark}\label{remark:mono}
   Reed-Solomon codes over $\F_q$ with common parameters $[n, 2, n-1]$, $n< q$ formed using different sets of evaluation points are monomially equivalent.
   This is a corollary of Theorem 4.15 in \cite{cabana2022cyclic}.
\end{remark}

\begin{example}
    Consider a Reed-Solomon code over $\mathbb{F}_4$ with parameters $[3, 2, 2]$. Notice that codes generated by generator matrices $G_1, G_2, G_3, G_4$ are all permutation equivalent, using a different choice of linearly independent rows for the generator matrix: 
    \[ G_1= \begin{blockarray}{cccc}
0 & 1 & \alpha \\
\begin{block}{(ccc)c}
  1& 1&1& p=1 \\
  0&1&\alpha & p=x\\
\end{block}
\end{blockarray} \]
\[
G_2= \begin{blockarray}{cccc}
1 & \alpha & \alpha^2 \\
\begin{block}{(ccc)c}
  1& 1&1& p=1 \\
  1 & \alpha & \alpha^2 & p=x\\
\end{block}
\end{blockarray}\cong \begin{blockarray}{cccc}
1 & \alpha & \alpha^2 \\
\begin{block}{(ccc)c}
  1& 1&1& p=1 \\
  0&\alpha &1 & p=\alpha^2(x+1)\\
\end{block}
\end{blockarray}
\]
\[
G_3= \begin{blockarray}{cccc}
0 & \alpha & \alpha^2 \\
\begin{block}{(ccc)c}
  1& 1&1& p=1 \\
  0 & \alpha & \alpha^2 & p=x\\
\end{block}
\end{blockarray}\cong \begin{blockarray}{cccc}
0 & \alpha & \alpha^2 \\
\begin{block}{(ccc)c}
  1& 1&1& p=1 \\
  0&\alpha &1 & p=\alpha^2x\\
\end{block}
\end{blockarray}
\]
\[
G_4= \begin{blockarray}{cccc}
0 & 1 & \alpha^2 \\
\begin{block}{(ccc)c}
  1& 1&1& p=1 \\
  0 & 1 & \alpha^2 & p=x\\
\end{block}
\end{blockarray}\cong \begin{blockarray}{cccc}
0 & 1 & \alpha^2 \\
\begin{block}{(ccc)c}
  1& 1&1& p=1 \\
  0&\alpha &1 & p=\alpha x\\
\end{block}
\end{blockarray}
\]

\end{example}

\begin{proposition}
    If $C_1$ and $C_2$ are permutation equivalent,  then $\mathcal{C}(H_{C_1})=\mathcal{C}(H_{C_2})$.

\end{proposition}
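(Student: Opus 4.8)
The plan is to show that permutation-equivalent starting codes yield parity-check matrices $H_{C_1}$ and $H_{C_2}$ that differ only by a permutation of their \emph{rows}, and then to invoke the elementary fact that a row permutation leaves the null space of a matrix unchanged. The crux is to verify that a coordinate permutation of the starting code acts purely on the rows of $H_C$ and not on its columns.

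First I would record the effect of the equivalence at the level of $M_C$. Suppose $\pi \in S_n$ realizes the equivalence, so that $C_2 = \pi(C_1)$, and note that each codeword $\mathbf{c}$ of $C_1$ corresponds to the codeword $\pi(\mathbf{c})$ of $C_2$. Under this correspondence the two codes are indexed by the same underlying data (the defining polynomials, equivalently the messages $\mathbf{x} \in \F_q^k$), which is unaffected by permuting coordinates; hence the columns of $M_{C_1}$ and $M_{C_2}$ are listed in the same order by the $p$-ary lexicographic ordering of Construction~\ref{construction:MH}. For a fixed column, the entries are permuted according to $\pi$, i.e.\ $(M_{C_2})_{j,\ell} = (M_{C_1})_{\pi^{-1}(j),\ell}$, so that $M_{C_2}$ is exactly $M_{C_1}$ with its $n$ rows permuted by $\pi$ and its columns left intact.

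Next I would pass from $M_C$ to $H_C$. Each of the $n$ rows of $M_C$ is expanded into a block of $q$ binary rows via the fixed substitution $\beta_i \mapsto \mathbf{e}_i$, which depends only on the (fixed) ordering of $\F_q$ and is therefore identical for both matrices. Consequently, permuting the $n$ rows of $M_C$ by $\pi$ permutes the $n$ blocks of $q$ rows of $H_C$ by $\pi$, so that $H_{C_2} = \Pi\, H_{C_1}$ for the block-row permutation matrix $\Pi$ realizing $\pi$. Since a row permutation does not alter the span of the rows, the matrices $H_{C_1}$ and $H_{C_2}$ have the same row space and hence the same null space, giving $\mathcal{C}(H_{C_1}) = \mathcal{C}(H_{C_2})$.

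The main obstacle is making the column-indexing claim precise, since this is exactly what separates genuine equality from mere permutation equivalence of the binary codes. One must argue carefully that the chosen ordering indexes the columns of $M_C$ by the defining polynomials (messages), which are invariant under a coordinate permutation, so that $\pi$ contributes no reordering of the columns; if instead one ordered the columns by the codeword vectors themselves, a coordinate permutation would reshuffle them and one would recover only permutation equivalence. Once this alignment of columns is established, the remaining steps are routine bookkeeping about how the unit-vector expansion turns a row permutation of $M_C$ into a block-row permutation of $H_C$.
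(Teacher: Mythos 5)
Your proof is correct and takes essentially the same route as the paper's: permutation equivalence of $C_1$ and $C_2$ amounts to a row permutation of $M_{C_1}$, which the unit-vector expansion turns into a block-row permutation of $H_{C_1}$, and row permutations do not change the null space, giving $\mathcal{C}(H_{C_1})=\mathcal{C}(H_{C_2})$. Your explicit treatment of the column indexing (columns tied to the defining messages/polynomials rather than re-sorted codeword vectors, so that $\pi$ induces no column reordering) is a point the paper's proof passes over silently, but it is the same argument.
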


\begin{proof}
    Codes $C_1$ and $C_2$ are permutation equivalent if there is a generator matrices $G_1$ of $C_1$ such that a permutation of columns of $G_1$ generates $C_2$. Equivalently, an $n \times q^k$ array $M_{C_1}$ containing the codewords of $C_1$ as columns can be transformed into an $n \times q^k$ array $M_{C_2}$ containing the codewords of $C_2$ as columns, via a permutation of the rows of $M_{C_1}$. Expanding the $M_{C_1}$ array to an $nq\times q^k$ matrix $H_{C_1}$, we see that a permutation of the rows of $M_{C_1}$ is simply a permutation of the $n\times q^k$ submatrices corresponding to each codeword position in $C_1$. The nullspace of $H_{C_1}$ is the same as the nullspace of a block-permutation of $H_{C_1}$, thus we have $\mathcal{C}(H_{C_1})=\mathcal{C}(H_{C_2})$. 
\end{proof}

\begin{theorem} \label{thm:monequiv}
    Suppose $C_1$ and $C_2$ are monomially equivalent. Then $\mathcal{C}(H_{C_1})$ and $\mathcal{C}(H_{C_2})$ are permutation equivalent and therefore monomially equivalent.
\end{theorem}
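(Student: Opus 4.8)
The plan is to leverage the previously established Proposition on permutation equivalence, which shows that permuting the coordinate positions of the starting code $C$ only permutes the row-blocks of $H_C$ and hence leaves the null space $\mathcal{C}(H_C)$ unchanged. Since monomial equivalence combines a coordinate permutation with a diagonal scaling, the essential new content is to understand what the scaling part does to $H_C$. I would therefore first reduce to the case where $C_2$ is obtained from $C_1$ by a pure diagonal (monomial with trivial permutation) action, using the earlier proposition to absorb the permutation factor; the final conclusion then follows by composing a block-row permutation with whatever permutation the diagonal part induces.

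The key observation to carry out is that scaling a single coordinate position $j$ of the code $C$ by a nonzero field element $\lambda \in \F_q^*$ acts on the $q$-ary array $M_{C}$ by multiplying every entry in row $j$ by $\lambda$. Under Construction~\ref{construction:MH}, row $j$ of $M_C$ expands into a block of $q$ rows of $H_C$, where a field element $\beta_i$ is encoded as the standard basis vector $\mathbf{e}_i$. The heart of the argument is that multiplication by $\lambda$ is a bijection $\beta_i \mapsto \lambda\beta_i$ of $\F_q$, and a bijection of field elements corresponds precisely to a permutation of the $q$ rows within that coordinate block: if $\lambda\beta_i = \beta_{\sigma(i)}$, then the unit vector $\mathbf{e}_i$ is sent to $\mathbf{e}_{\sigma(i)}$. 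Thus scaling coordinate $j$ by $\lambda$ realizes a permutation $\sigma_\lambda$ of the $q$ rows in the $j$th row-block of $H_C$. I would make this precise by noting that the whole monomial action sends the array for $C_1$ to the array for $C_2$ entrywise, and the expansion step turns each such entrywise relabeling into a within-block row permutation.

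Putting these together, the map from $H_{C_1}$ to $H_{C_2}$ is a composition of (i) a permutation of the $n$ row-blocks, coming from the coordinate permutation part of the monomial matrix, and (ii) within-block row permutations $\sigma_{\lambda_j}$, one for each coordinate $j$, coming from the diagonal scalings. Both of these are row permutations of $H_{C_1}$, so $H_{C_2}$ is obtained from $H_{C_1}$ by a single permutation of its $nq$ rows. Since permuting rows of a parity-check matrix does not change its null space, $\mathcal{C}(H_{C_1}) = \mathcal{C}(H_{C_2})$ as sets, and in particular the two codes are permutation equivalent; since they are binary, they are monomially equivalent as well, matching the remark in the statement that binary monomial and permutation equivalence coincide.

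The step I expect to require the most care is verifying that the scaling genuinely induces a \emph{within-block} row permutation that is consistent across all the columns simultaneously — i.e. that the same permutation $\sigma_{\lambda_j}$ works for every codeword column, rather than a column-dependent relabeling. This holds because the bijection $\beta_i \mapsto \lambda_j \beta_i$ depends only on $\lambda_j$ and not on which codeword is being evaluated, so the induced row permutation of the $j$th block is genuinely a fixed permutation; I would state this explicitly to ensure the action is a legitimate (codeword-independent) row permutation of the full matrix $H_{C_1}$.
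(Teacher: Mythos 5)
Your core mechanism is the same as the paper's: factor the monomial matrix as $DP$, observe that the permutation $P$ permutes the $n$ block rows of the parity-check matrix, and observe that scaling coordinate $j$ by $\lambda_j \in \F_q^*$ is a bijection $\beta \mapsto \lambda_j\beta$ of $\F_q$ and therefore induces a fixed, codeword-independent permutation of the $q$ rows inside block row $j$. Your closing remark about checking codeword-independence of that within-block permutation is exactly the right point to verify, and it holds for the reason you give.

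However, one step is false as written: the claim that $H_{C_2}$ is obtained from $H_{C_1}$ by a permutation of its $nq$ rows \emph{alone}, and hence that $\mathcal{C}(H_{C_1})=\mathcal{C}(H_{C_2})$ as sets. Construction~\ref{construction:MH} fixes the column order of $H_{C_2}$ by listing the codewords of $C_2$ in \emph{their own} lexicographic order, not in the order carried over from $C_1$ by the bijection $\mathbf{c}\mapsto \mathbf{c}DP$. That bijection does not preserve lexicographic order in general: neither multiplying a coordinate by a scalar nor permuting coordinates is order-preserving (already over $\F_3$, swapping the two coordinates sends the ordered pair $(0,1)<(1,0)$ to $(1,0)>(0,1)$, and multiplying a coordinate by $2$ reverses $1<2$). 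So the columns of $H_{C_2}$ are, in general, a nontrivial reordering of the images of the columns of $H_{C_1}$, and the same caveat applies to the diagonal-only step $C_1 \mapsto C_1D$ in your reduction. This is precisely why the paper's proof tracks where each codeword's column lands and concludes that $H_{C_2}$ is obtained from $H_{C_1}$ via row \emph{and column} permutations, and why the theorem asserts only permutation equivalence rather than equality. The slip is harmless for the stated result, since permuting columns of a parity-check matrix yields a permutation-equivalent code; once you add the column permutation induced by re-sorting $\{\mathbf{c}DP : \mathbf{c}\in C_1\}$ into lexicographic order, your argument matches the paper's and the conclusion stands.
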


\begin{proof}
    Let $G_1$ be a generator matrix for $C_1$. There exists a diagonal matrix $D$ and permutation matrix $P$ such that $G_1 D P$ generates $C_2$. In particular, if $\mathbf{c} \in C_1$, then $\mathbf{c} D P \in C_2$. Let $\alpha_i$ be the $i$th entry on the  diagonal of $D$ and $\pi$ be the permutation corresponding to $P$. Under this operation, $c_i \mapsto \alpha_i c_{\pi(i)}$. In $H_{C_1}$, for all $i \in [n]$, the column corresponding to $\mathbf{c}$ has a $1$ in the row corresponding to $c_i$ of block row $i$ and zeroes elsewhere in block row $i$. In $H_{C_2}$, $\mathbf{c}DP$, for all $i \in [n]$, is such that the column corresponding to $\mathbf{c}DP$ has a $1$ in the row corresponding to $\alpha_i c_i$ and zeroes elsewhere in block row $\pi(i)$. Because there is a bijection between codewords in $\mathbf{c} \in C_1$ and $\mathbf{c}DP \in C_2$, this is true for all codewords. So $H_{C_2}$ can be obtained via row and column permutations from $H_{C_1}$ and $\mathcal{C}({H_{C_1}})$ and $\mathcal{C}({H_{C_2}})$ are permutation equivalent (and hence monomially equivalent, as they are binary).
\end{proof}

Given the preceding results, one might wonder if nonbinary codes with the same parameters always yield binary parity-check codes with the same parameters, but they do not. 

\begin{example}
We give three non-monomially equivalent $[5,3,2]_3$ codes $C_1$, $C_2$, and $C_3$ with generator matrices $G_1$, $G_2$, and $G_3$, respectively.
\[ G_1=\begin{bmatrix}
    1&0&0&1&1\\
    0&1&0&1&2\\
    0&0&1&0&2 
\end{bmatrix} \quad 
G_2 = \begin{bmatrix}
    1&0&0&2&2\\
    0&1&0&2&2\\
    0&0&1&2&2     
\end{bmatrix} \quad
G_3 = \begin{bmatrix}
    1&0&0&1&1\\
    0&1&0&2&0\\
    0&0&1&0&2
\end{bmatrix}\]

The corresponding binary codes $\mathcal{C}(H_{C_1})$ and $\mathcal{C}(H_{C_3})$ both have parameters $[27,16,4]$, while $\mathcal{C}(H_{C_2})$ has parameters $[27,18,4]$. We further remark that $C_1$ has weight enumerator $4x^5+6x^4+14x^3+2x^2+1$ and $C_3$ has weight enumerator $2x^5+12x^4+8x^3+4x^2+1$, which demonstrates that codes with different weight enumerators can yield binary $H_C$ codes with the same parameters. The code $C_2$ has weight enumerator $6x^5+6x^4+8x^3+6x^2+1$. 
\end{example}

\section{Table of code parameters}
\label{sec:table}

We calculated the parameters of the codes obtained in Construction~\ref{construction:MH} for some small field sizes. In Table~\ref{tab:parameters}, the parameters $[n,k]_q$ of the starting Reed-Solomon code are given in the first column, the parameters $[\tilde{n},\tilde{k},\tilde{d}]$ obtained via Construction~\ref{construction:MH} in the second column, and the best known $d$ for a $[\tilde{n},\tilde{k}]_2$ code from Grassl's codetables.de \cite{Grassl:codetables} in the final column. In the middle column, an asterisk indicates that that code meets the best known minimum distance for its length and dimension. By Theorem~\ref{thm:monequiv} and Remark~\ref{remark:mono}, the resulting code parameters are independent of the choice of evaluation points of the starting Reed-Solomon codes.

\begin{table}[htbp]
\caption{Construction~\ref{construction:MH} code parameters.}
\begin{center}
\begin{tabular}{|c|c|c|c|}
\hline
RS Code & Construction 1 & $H_C$ Dim. & Best Known $d$ \\ \hline
$[3,2]_3$ & $[9,2,6]^*$ & $9 \times 9$ & 6 \\
$[2,2]_3$ & $[9,4,4]^*$ & $6 \times 9$ & 4 \\
$[4,2]_4$ & $[16,7,6]^*$ & $16 \times 16$ & 6 \\
$[3,2]_4$ & $[16,8,4]$ & $12 \times 16$ & 5 \\
$[2,2]_4$ & $[16,9,4]^*$ & $8 \times 16$ & 4 \\
$[5,2]_5$ & $[25,4,10]$ & $25 \times 25$ & 12 \\
$[4,2]_5$ & $[25,8,8]$ & $20 \times 25$ & 9 \\
$[3,2]_5$ & $[25,12,6]$ & $15 \times 25$ & 8 \\
$[2,2]_5$ & $[25,16,4]^*$ & $10 \times 25$ & 4 \\
$[7,2]_7$ & $[49,6,14]$ & $49 \times 49$ & 24 \\
$[6,2]_7$ & $[49,12,12]$ & $42 \times 49$ & 18-19 \\
$[5,2]_7$ & $[49,18,12]$ & $35 \times 49$ & 14-15 \\
$[4,2]_7$ & $[49,24,10]$ & $28 \times 49$ & 12 \\
$[3,2]_7$ & $[49,30,6]$ & $21 \times 49$ & 8 \\
$[2,2]_7$ & $[49,36,4]$ & $14 \times 49$ & 6 \\
$[8,2]_8$ & $[64,37,10]^*$ & $64 \times 64$ & 10-12 \\
$[7,2]_8$ & $[64,38,8]$ & $56 \times 64$ & 10-12 \\
$[6,2]_8$ & $[64,39,8]$ & $48 \times 64$ & 10-12 \\
$[5,2]_8$ & $[64,40,8]$ & $40 \times 64$ & 9-11 \\
$[4,2]_8$ & $[64,41,8]^*$ & $32 \times 64$ & 8-10 \\
$[8,2]_9$ & $[81,16,16]$ & $72 \times 81$ & 31-32 \\
$[7,2]_9$ & $[81,24,16]$ & $63 \times 81$ & 24-28 \\
$[6,2]_9$ & $[81,32,16]$ & $54 \times 81$ & 18-24 \\
$[5,2]_9$ & $[81,40,10]$ & $45 \times 81$ & 16-20 \\ 
$[10,2]_{11}$ & $[121,20,20]$ & $110 \times 121$ & 45-50 \\ 
$[12,2]_{13}$ & $[169,24,24]$ & $156 \times 169$ & 60-70 \\
\hline
\end{tabular}
\label{tab:parameters}
\end{center}
\end{table}

Because our codes are quasi-cyclic (or HQC) and those in the Code Tables are not, the codes constructed here that meet the given bounds are different codes than those in the Code Tables. In addition to being different, these codes have many desirable properties: they are quasi-cyclic (or HQC), have Tanner graph representations of girth $6$, and are constructed via an algebraic method.

\section{Conclusions}
\label{sec:conclusion}

This paper gives an algebraic approach to obtaining a family of quasi-cyclic parity-check codes obtained from Reed-Solomon codes and polynomial evaluation codes with the field partition property via the Kautz-Singleton superimposed code construction. We showed that the construction yields explicit algebraic HQC codes, which to our knowledge is the first of its kind.

We analyzed several structural properties of the resulting codes and provided some explicit parameters and bounds. In particular, when the construction is applied to Reed-Solomon codes of dimension two, the resulting parity-check codes have Tanner graph representations of girth six, a desirable property for iterative decoding. At small blocklengths, we showed that some of these codes meet the parameters of the currently best known binary codes.

More broadly, this paper illustrates a connection between classical nonbinary algebraic codes and structured hierarchical quasi-cyclic parity-check codes. This framework suggests that further algebraic constructions of hierarchical quasi-cyclic codes may be possible and that their parameters can be studied using similar techniques.

This work suggests several directions for future research. Here, we only considered Reed-Solomon and certain polynomial evaluation codes, but any $q$-ary code can be used in the Kautz-Singleton construction. It would be interesting to determine the properties of other binary parity-check codes from this construction, and to characterize when the resulting codes are quasi-cyclic.

\newpage
\appendix
\section{}

    \begin{figure}[h]
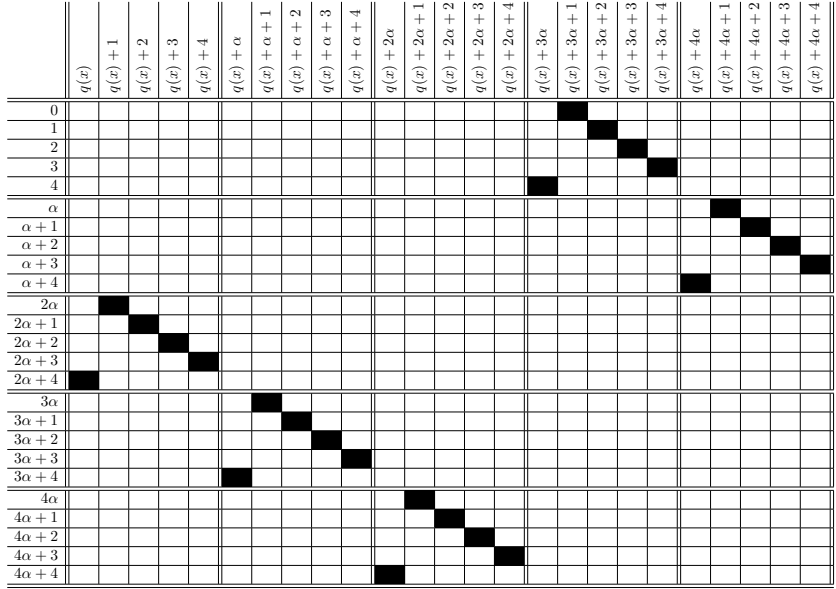

        \centering
        \scalebox{.53}{
        \begin{tabular}{r||c|c|c|c|c||c|c|c|c|c||c|c|c|c|c||c|c|c|c|c||c|c|c|c|c||}
            & \begin{sideways} $q(x)$ \end{sideways} & \begin{sideways} $q(x)+1$ \end{sideways} & \begin{sideways} $q(x)+2$ \end{sideways} & \begin{sideways} $q(x)+3$ \end{sideways} & \begin{sideways} $q(x)+4$ \end{sideways} & \begin{sideways} $q(x)+\alpha$ \end{sideways} & \begin{sideways} $q(x)+\alpha+1$ \end{sideways} & \begin{sideways} $q(x)+\alpha+2$ \end{sideways} & \begin{sideways} $q(x)+\alpha+3$ \end{sideways} & \begin{sideways} $q(x)+\alpha+4$ \end{sideways} & \begin{sideways} $q(x)+2\alpha$ \end{sideways} & \begin{sideways} $q(x)+2\alpha+1$ \end{sideways} & \begin{sideways} $q(x)+2\alpha+2$ \end{sideways} & \begin{sideways} $q(x)+2\alpha+3$ \end{sideways} & \begin{sideways} $q(x)+2\alpha+4$ \end{sideways} & \begin{sideways} $q(x)+3\alpha$ \end{sideways} & \begin{sideways} $q(x)+3\alpha+1$ \end{sideways} & \begin{sideways} $q(x)+3\alpha+2$ \end{sideways} & \begin{sideways} $q(x)+3\alpha+3$ \end{sideways} & \begin{sideways} $q(x)+3\alpha+4$ \end{sideways} & \begin{sideways} $q(x)+4\alpha$ \end{sideways} & \begin{sideways} $q(x)+4\alpha+1$ \end{sideways} & \begin{sideways} $q(x)+4\alpha+2$ \end{sideways} & \begin{sideways} $q(x)+4\alpha+3$ \end{sideways} & \begin{sideways} $q(x)+4\alpha+4$ \end{sideways} \\ \hline \hline
            $0$         & & & & & & & & & & & & & & & & & \cellcolor{black} & & & & & & & & \\ \hline
            $1$         & & & & & & & & & & & & & & & & & & \cellcolor{black} & & & & & & & \\ \hline
            $2$         & & & & & & & & & & & & & & & & & & & \cellcolor{black} & & & & & & \\ \hline
            $3$         & & & & & & & & & & & & & & & & & & & & \cellcolor{black} & & & & & \\ \hline
            $4$         & & & & & & & & & & & & & & & & \cellcolor{black} & & & & & & & & & \\ \hline \hline
            $\alpha$    & & & & & & & & & & & & & & & & & & & & & & \cellcolor{black} & & & \\ \hline
            $\alpha+1$  & & & & & & & & & & & & & & & & & & & & & & & \cellcolor{black} & & \\ \hline
            $\alpha+2$  & & & & & & & & & & & & & & & & & & & & & & & & \cellcolor{black} & \\ \hline
            $\alpha+3$  & & & & & & & & & & & & & & & & & & & & & & & & & \cellcolor{black} \\ \hline
            $\alpha+4$  & & & & & & & & & & & & & & & & & & & & & \cellcolor{black} & & & & \\ \hline \hline
            $2\alpha$   & & \cellcolor{black} & & & & & & & & & & & & & & & & & & & & & & & \\ \hline
            $2\alpha+1$ & & & \cellcolor{black} & & & & & & & & & & & & & & & & & & & & & & \\ \hline
            $2\alpha+2$ & & & & \cellcolor{black} & & & & & & & & & & & & & & & & & & & & & \\ \hline
            $2\alpha+3$ & & & & & \cellcolor{black} & & & & & & & & & & & & & & & & & & & & \\ \hline
            $2\alpha+4$ & \cellcolor{black} & & & & & & & & & & & & & & & & & & & & & & & & \\ \hline \hline
            $3\alpha$   & & & & & & & \cellcolor{black} & & & & & & & & & & & & & & & & & & \\ \hline
            $3\alpha+1$ & & & & & & & & \cellcolor{black} & & & & & & & & & & & & & & & & & \\ \hline
            $3\alpha+2$ & & & & & & & & & \cellcolor{black} & & & & & & & & & & & & & & & & \\ \hline 
            $3\alpha+3$ & & & & & & & & & & \cellcolor{black} & & & & & & & & & & & & & & & \\ \hline
            $3\alpha+4$ & & & & & & \cellcolor{black} & & & & & & & & & & & & & & & & & & & \\ \hline \hline
            $4\alpha$   & & & & & & & & & & & & \cellcolor{black} & & & & & & & & & & & & & \\ \hline
            $4\alpha+1$ & & & & & & & & & & & & & \cellcolor{black} & & & & & & & & & & & & \\ \hline
            $4\alpha+2$ & & & & & & & & & & & & & & \cellcolor{black} & & & & & & & & & & & \\ \hline
            $4\alpha+3$ & & & & & & & & & & & & & & & \cellcolor{black} & & & & & & & & & & \\ \hline
            $4\alpha+4$ & & & & & & & & & & & \cellcolor{black} & & & & & & & & & & & & & & \\ \hline \hline
        \end{tabular}}
        \caption{A visualization of the $25 \times 25$ submatrix corresponding to evaluation point $\beta=\alpha$ and polynomial $q(x)=(3 \alpha + 4)x$ as in Example~\ref{ex:gf25}. Black squares correspond to $1$s and white squares to $0$s.} \label{fig:gf25}
    \end{figure}

        \begin{figure}[h]
        \centering
        \scalebox{.53}{
        \begin{tabular}{r|||c|c|c||c|c|c||c|c|c|||c|c|c||c|c|c||c|c|c|||c|c|c||c|c|c||c|c|c|||}
            & \begin{sideways} $q(x)$ \end{sideways} & \begin{sideways} $q(x)+1$ \end{sideways} & \begin{sideways} $q(x)+2$ \end{sideways} & \begin{sideways} $q(x)+\alpha$ \end{sideways} & \begin{sideways} $q(x)+\alpha+1$ \end{sideways} & \begin{sideways} $q(x)+\alpha+2$ \end{sideways} & \begin{sideways} $q(x)+2\alpha$ \end{sideways} & \begin{sideways} $q(x)+2\alpha+1$ \end{sideways} & \begin{sideways} $q(x)+2\alpha+2$ \end{sideways} & \begin{sideways} $q(x)+\alpha^2$ \end{sideways} & \begin{sideways} $q(x)+\alpha^2+1$ \end{sideways} & \begin{sideways} $q(x)+\alpha^2+2$ \end{sideways} & \begin{sideways} $q(x)+\alpha^2+\alpha$ \end{sideways} & \begin{sideways} $q(x)+\alpha^2+\alpha+1$ \end{sideways} & \begin{sideways} $q(x)+\alpha^2+\alpha+2$ \end{sideways} & \begin{sideways} $q(x)+\alpha^2+2\alpha$ \end{sideways} & \begin{sideways} $q(x)+\alpha^2+2\alpha+1$ \end{sideways} & \begin{sideways} $q(x)+\alpha^2+2\alpha+2$ \end{sideways} & \begin{sideways} $q(x)+2\alpha^2$ \end{sideways} & \begin{sideways} $q(x)+2\alpha^2+1$ \end{sideways} & \begin{sideways} $q(x)+2\alpha^2+2$ \end{sideways} & \begin{sideways} $q(x)+2\alpha^2+\alpha$ \end{sideways} & \begin{sideways} $q(x)+2\alpha^2+\alpha+1$ \end{sideways} & \begin{sideways} $q(x)+2\alpha^2+\alpha+2$ \end{sideways} & \begin{sideways} $q(x)+2\alpha^2+2\alpha$ \end{sideways} & \begin{sideways} $q(x)+2\alpha^2+2\alpha+1$ \end{sideways} & \begin{sideways} $q(x)+2\alpha^2+2\alpha+2$ \end{sideways} \\ \hline \hline \hline
            $0$ & & & & & & & & & & & & & & & & & & & & \cellcolor{black} & & & & & & & \\ \hline
            $1$ & & & & & & & & & & & & & & & & & & & & & \cellcolor{black} & & & & & & \\ \hline
            $2$ & & & & & & & & & & & & & & & & & & & \cellcolor{black} & & & & & & & & \\ \hline \hline
            $\alpha$ & & & & & & & & & & & & & & & & & & & & & & & \cellcolor{black} & & & & \\ \hline
            $\alpha+1$ & & & & & & & & & & & & & & & & & & & & & & & & \cellcolor{black} & & & \\ \hline
            $\alpha+2$ & & & & & & & & & & & & & & & & & & & & & & \cellcolor{black} & & & & & \\ \hline \hline
            $2\alpha$ & & & & & & & & & & & & & & & & & & & & & & & & & & \cellcolor{black} & \\ \hline
            $2\alpha+1$ & & & & & & & & & & & & & & & & & & & & & & & & & & & \cellcolor{black} \\ \hline
            $2\alpha+2$ & & & & & & & & & & & & & & & & & & & & & & & & & \cellcolor{black} & & \\ \hline \hline \hline
            $\alpha^2$ & & \cellcolor{black} & & & & & & & & & & & & & & & & & & & & & & & & & \\ \hline
            $\alpha^2+1$ & & & \cellcolor{black} & & & & & & & & & & & & & & & & & & & & & & & & \\ \hline
            $\alpha^2+2$ & \cellcolor{black} & & & & & & & & & & & & & & & & & & & & & & & & & & \\ \hline \hline
            $\alpha^2+\alpha$ & & & & & \cellcolor{black} & & & & & & & & & & & & & & & & & & & & & & \\ \hline
            $\alpha^2+\alpha+1$ & & & & & & \cellcolor{black} & & & & & & & & & & & & & & & & & & & & & \\ \hline
            $\alpha^2+\alpha+2$ & & & & \cellcolor{black} & & & & & & & & & & & & & & & & & & & & & & &\\ \hline \hline
            $\alpha^2+2\alpha$ & & & & & & & & \cellcolor{black} & & & & & & & & & & & & & & & & & & & \\ \hline
            $\alpha^2+2\alpha+1$ & & & & & & & & & \cellcolor{black} & & & & & & & & & & & & & & & & & & \\ \hline
            $\alpha^2+2\alpha+2$ & & & & & & & \cellcolor{black} & & & & & & & & & & & & & & & & & & & & \\ \hline  \hline \hline
            $2\alpha^2$ & & & & & & & & & & & \cellcolor{black} & & & & & & & & & & & & & & & & \\ \hline
            $2\alpha^2+1$ & & & & & & & & & & & & \cellcolor{black} & & & & & & & & & & & & & & & \\ \hline
            $2\alpha^2+2$ & & & & & & & & & & \cellcolor{black} & & & & & & & & & & & & & & & & & \\ \hline \hline
            $2\alpha^2+\alpha$ & & & & & & & & & & & & & & \cellcolor{black} & & & & & & & & & & & & & \\ \hline
            $2\alpha^2+\alpha+1$ & & & & & & & & & & & & & & & \cellcolor{black} & & & & & & & & & & & & \\ \hline
            $2\alpha^2+\alpha+2$ & & & & & & & & & & & & & \cellcolor{black} & & & & & & & & & & & & & & \\ \hline \hline
            $2\alpha^2+2\alpha$ & & & & & & & & & & & & & & & & & \cellcolor{black} & & & & & & & & & & \\ \hline
            $2\alpha^2+2\alpha+1$ & & & & & & & & & & & & & & & & & & \cellcolor{black} & & & & & & & & & \\ \hline
            $2\alpha^2+2\alpha+2$ & & & & & & & & & & & & & & & & \cellcolor{black} & & & & & & & & & & & \\ \hline \hline \hline
        \end{tabular}}
        \caption{A visualization of the $27 \times 27$ submatrix corresponding to evaluation point $\beta = \alpha^2+\alpha+2$ and polynomial $q(x)=\alpha x$ as in Example~\ref{ex:gf27}. Black squares correspond to $1$s and white squares to $0$s.} \label{fig:gf27}
    \end{figure}

\begin{sidewaysfigure}
\centering
    
\noindent\scalebox{0.75}{
$
\begin{blockarray}{rccccccccccccccccccccccccc}
& \rotatebox{90}{0} & \rotatebox{90}{$x$} & \rotatebox{90}{$2x$} & \rotatebox{90}{$3x$} & \rotatebox{90}{$4x$} & \rotatebox{90}{$\alpha x$} & \rotatebox{90}{$(\alpha+1)x$} & \rotatebox{90}{$(\alpha+2)x$} & \rotatebox{90}{$(\alpha+3)x$} & \rotatebox{90}{$(\alpha+4)x$} & \rotatebox{90}{$2\alpha x $} & \rotatebox{90}{$(2\alpha+1)x$} & \rotatebox{90}{$(2\alpha+2)x$} & \rotatebox{90}{$(2\alpha+3)x$} & \rotatebox{90}{$(2\alpha+4)x$} & \rotatebox{90}{$3\alpha x$} & \rotatebox{90}{$(3\alpha+1)x$} & \rotatebox{90}{$(3\alpha+2)x$} & \rotatebox{90}{$(3 \alpha + 3)x$} & \rotatebox{90}{$(3\alpha+4)x$} & \rotatebox{90}{$4\alpha x$} & \rotatebox{90}{$(4\alpha+1)x$} & \rotatebox{90}{$(4\alpha+2)x$} & \rotatebox{90}{$(4\alpha+3)x$} & \rotatebox{90}{$(4\alpha+4)x$} \\
\begin{block}{r[ccccccccccccccccccccccccc]}
0 & 0 & 0 & 0 & 0 & 0 & 0 & 0 & 0 & 0 & 0 & 0 & 0 & 0 & 0 & 0 & 0 & 0 & 0 & 0 & 0 & 0 & 0 & 0 & 0 & 0 \\
x_0 & 0 & x_0 & x_0^2 & x_0^3 & x_0^4 & x_1 & x_1x_0 & x_1x_0^2 & x_1x_0^3 & x_1x_0^4 & x_1^2 & x_1^2x_0 & x_1^2x_0^2 & x_1^2x_0^3 & x_1^2x_0^4 & x_1^3 & x_1^3x_0 & x_1^3x_0^2 & x_1^3x_0^3 & x_1^3x_0^4 & x_1^4 & x_1^4x_0 & x_1^4x_0^2 & x_1^4x_0^3 & x_1^4x_0^4 \\
x_0^2 & 0 & x_0^2 & x_0^4 & x_0 & x_0^3 & x_1^2 & x_1^2x_0^2 & x_1^2x_0^4 & x_1^2x_0 & x_1^2x_0^3 & x_1^4 & x_1^4x_0^2 & x_1^4x_0^4 & x_1^4x_0 & x_1^4x_0^3 & x_1 & x_1x_0^2 & x_1x_0^4 & x_1x_0 & x_1x_0^3 & x_1^3 & x_1^3x_0^2 & x_1^3x_0^4 & x_1^3x_0 & x_1^3x_0^3 \\
x_0^3 & 0 & x_0^3 & x_0 & x_0^4 & x_0^2 & x_1^3 & x_1^3x_0^3 & x_1^3x_0 & x_1^3x_0^4 & x_1^3x_0^2 & x_1 & x_1x_0^3 & x_1x_0 & x_1x_0^4 & x_1x_0^2 & x_1^4 & x_1^4x_0^3 & x_1^4x_0 & x_1^4x_0^4 & x_1^4x_0^2 & x_1^2 & x_1^2x_0^3 & x_1^2x_0 & x_1^2x_0^4 & x_1^2x_0^2 \\
x_0^4 & 0 & x_0^4 & x_0^3 & x_0^2 & x_0 & x_1^4 & x_1^4x_0^4 & x_1^4x_0^3 & x_1^4x_0^2 & x_1^4x_0 & x_1^3 & x_1^3x_0^4 & x_1^3x_0^3 & x_1^3x_0^2 & x_1^3x_0 & x_1^2 & x_1^2x_0^4 & x_1^2x_0^3 & x_1^2x_0^2 & x_1^2x_0 & x_1 & x_1x_0^4 & x_1x_0^3 & x_1x_0^2 & x_1x_0 \\
\alpha & 0 & x_1 & x_1^2 & x_1^3 & x_1^4 & x_1x_0^3 & x_1^2x_0^3 & x_1^3x_0^3 & x_1^4x_0^3 & x_0^3 & x_1^2x_0 & x_1^3x_0 & x_1^4x_0 & x_0 & x_1x_0 & x_1^3x_0^4 & x_1^4x_0^4 & x_0^4 & x_1x_0^4 & x_1^2x_0^4 & x_1^4x_0^2 & x_0^2 & x_1x_0^2 & x_1^2x_0^2 & x_1^3x_0^2 \\
\alpha+1 & 0 & x_1x_0 & x_1^2x_0^2 & x_1^3x_0^3 & x_1^4x_0^4 & x_1^2x_0^3 & x_1^3x_0^4 & x_1^4 & x_0 & x_1x_0^2 & x_1^4x_0 & x_0^2 & x_1x_0^3 & x_1^2x_0^4 & x_1^3 & x_1x_0^4 & x_1^2 & x_1^3x_0 & x_1^4x_0^2 & x_0^3 & x_1^3x_0^2 & x_1^4x_0^3 & x_0^4 & x_1 & x_1^2x_0 \\
\alpha+2 & 0 & x_1x_0^2 & x_1^2x_0^4 & x_1^3x_0 & x_1^4x_0^3 & x_1^3x_0^3 & x_1^4 & x_0^2 & x_1x_0^4 & x_1^2x_0 & x_1x_0 & x_1^2x_0^3 & x_1^3 & x_1^4x_0^2 & x_0^4 & x_1^4x_0^4 & x_0 & x_1x_0^3 & x_1^2 & x_1^3x_0^2 & x_1^2x_0^2 & x_1^3x_0^4 & x_1^4x_0 & x_0^3 & x_1 \\
\alpha+3 & 0 & x_1x_0^3 & x_1^2x_0 & x_1^3x_0^4 & x_1^4x_0^2 & x_1^4x_0^3 & x_0 & x_1x_0^4 & x_1^2x_0^2 & x_1^3 & x_1^3x_0 & x_1^4x_0^4 & x_0^2 & x_1 & x_1^2x_0^3 & x_1^2x_0^4 & x_1^3x_0^2 & x_1^4 & x_0^3 & x_1x_0 & x_1x_0^2 & x_1^2 & x_1^3x_0^3 & x_1^4x_0 & x_0^4 \\
\alpha+4 & 0 & x_1x_0^4 & x_1^2x_0^3 & x_1^3x_0^2 & x_1^4x_0 & x_0^3 & x_1x_0^2 & x_1^2x_0 & x_1^3 & x_1^4x_0^4 & x_0 & x_1 & x_1^2x_0^4 & x_1^3x_0^3 & x_1^4x_0^2 & x_0^4 & x_1x_0^3 & x_1^2x_0^2 & x_1^3x_0 & x_1^4 & x_0^2 & x_1x_0 & x_1^2 & x_1^3x_0^4 & x_1^4x_0^3 \\
2\alpha & 0 & x_1^2 & x_1^4 & x_1 & x_1^3 & x_1^2x_0 & x_1^4x_0 & x_1x_0 & x_1^3x_0 & x_0 & x_1^4x_0^2 & x_1x_0^2 & x_1^3x_0^2 & x_0^2 & x_1^2x_0^2 & x_1x_0^3 & x_1^3x_0^3 & x_0^3 & x_1^2x_0^3 & x_1^4x_0^3 & x_1^3x_0^4 & x_0^4 & x_1^2x_0^4 & x_1^4x_0^4 & x_1x_0^4 \\
2 \alpha+1 & 0 & x_1^2x_0 & x_1^4x_0^2 & x_1x_0^3 & x_1^3x_0^4 & x_1^3x_0 & x_0^2 & x_1^2x_0^3 & x_1^4x_0^4 & x_1 & x_1x_0^2 & x_1^3x_0^3 & x_0^4 & x_1^2 & x_1^4x_0 & x_1^4x_0^3 & x_1x_0^4 & x_1^3 & x_0 & x_1^2x_0^2 & x_1^2x_0^4 & x_1^4 & x_1x_0 & x_1^3x_0^2 & x_0^3 \\
2 \alpha+2 & 0 & x_1^2x_0^2 & x_1^4x_0^4 & x_1x_0 & x_1^3x_0^3 & x_1^4x_0 & x_1x_0^3 & x_1^3 & x_0^2 & x_1^2x_0^4 & x_1^3x_0^2 & x_0^4 & x_1^2x_0 & x_1^4x_0^3 & x_1 & x_1^2x_0^3 & x_1^4 & x_1x_0^2 & x_1^3x_0^4 & x_0 & x_1x_0^4 & x_1^3x_0 & x_0^3 & x_1^2 & x_1^4x_0^2 \\
2\alpha+3 & 0 & x_1^2x_0^3 & x_1^4x_0 & x_1x_0^4 & x_1^3x_0^2 & x_0 & x_1^2x_0^4 & x_1^4x_0^2 & x_1 & x_1^3x_0^3 & x_0^2 & x_1^2 & x_1^4x_0^3 & x_1x_0 & x_1^3x_0^4 & x_0^3 & x_1^2x_0 & x_1^4x_0^4 & x_1x_0^2 & x_1^3 & x_0^4 & x_1^2x_0^2 & x_1^4 & x_1x_0^3 & x_1^3x_0 \\
2\alpha+4 & 0 & x_1^2x_0^4 & x_1^4x_0^3 & x_1x_0^2 & x_1^3x_0 & x_1x_0 & x_1^3 & x_0^4 & x_1^2x_0^3 & x_1^4x_0^2 & x_1^2x_0^2 & x_1^4x_0 & x_1 & x_1^3x_0^4 & x_0^3 & x_1^3x_0^3 & x_0^2 & x_1^2x_0 & x_1^4 & x_1x_0^4 & x_1^4x_0^4 & x_1x_0^3 & x_1^3x_0^2 & x_0 & x_1^2 \\
3\alpha & 0 & x_1^3 & x_1 & x_1^4 & x_1^2 & x_1^3x_0^4 & x_1x_0^4 & x_1^4x_0^4 & x_1^2x_0^4 & x_0^4 & x_1x_0^3 & x_1^4x_0^3 & x_1^2x_0^3 & x_0^3 & x_1^3x_0^3 & x_1^4x_0^2 & x_1^2x_0^2 & x_0^2 & x_1^3x_0^2 & x_1x_0^2 & x_1^2x_0 & x_0 & x_1^3x_0 & x_1x_0 & x_1^4x_0 \\
3\alpha+1 & 0 & x_1^3x_0 & x_1x_0^2 & x_1^4x_0^3 & x_1^2x_0^4 & x_1^4x_0^4 & x_1^2 & x_0 & x_1^3x_0^2 & x_1x_0^3 & x_1^3x_0^3 & x_1x_0^4 & x_1^4 & x_1^2x_0 & x_0^2 & x_1^2x_0^2 & x_0^3 & x_1^3x_0^4 & x_1 & x_1^4x_0 & x_1x_0 & x_1^4x_0^2 & x_1^2x_0^3 & x_0^4 & x_1^3 \\
3\alpha+2 & 0 & x_1^3x_0^2 & x_1x_0^4 & x_1^4x_0 & x_1^2x_0^3 & x_0^4 & x_1^3x_0 & x_1x_0^3 & x_1^4 & x_1^2x_0^2 & x_0^3 & x_1^3 & x_1x_0^2 & x_1^4x_0^4 & x_1^2x_0 & x_0^2 & x_1^3x_0^4 & x_1x_0 & x_1^4x_0^3 & x_1^2 & x_0 & x_1^3x_0^3 & x_1 & x_1^4x_0^2 & x_1^2x_0^4 \\
3\alpha+3 & 0 & x_1^3x_0^3 & x_1x_0 & x_1^4x_0^4 & x_1^2x_0^2 & x_1x_0^4 & x_1^4x_0^2 & x_1^2 & x_0^3 & x_1^3x_0 & x_1^2x_0^3 & x_0 & x_1^3x_0^4 & x_1x_0^2 & x_1^4 & x_1^3x_0^2 & x_1 & x_1^4x_0^3 & x_1^2x_0 & x_0^4 & x_1^4x_0 & x_1^2x_0^4 & x_0^2 & x_1^3 & x_1x_0^3 \\
3\alpha+4 & 0 & x_1^3x_0^4 & x_1x_0^3 & x_1^4x_0^2 & x_1^2x_0 & x_1^2x_0^4 & x_0^3 & x_1^3x_0^2 & x_1x_0 & x_1^4 & x_1^4x_0^3 & x_1^2x_0^2 & x_0 & x_1^3 & x_1x_0^4 & x_1x_0^2 & x_1^4x_0 & x_1^2 & x_0^4 & x_1^3x_0^3 & x_1^3x_0 & x_1 & x_1^4x_0^4 & x_1^2x_0^3 & x_0^2 \\
4\alpha & 0 & x_1^4 & x_1^3 & x_1^2 & x_1 & x_1^4x_0^2 & x_1^3x_0^2 & x_1^2x_0^2 & x_1x_0^2 & x_0^2 & x_1^3x_0^4 & x_1^2x_0^4 & x_1x_0^4 & x_0^4 & x_1^4x_0^4 & x_1^2x_0 & x_1x_0 & x_0 & x_1^4x_0 & x_1^3x_0 & x_1x_0^3 & x_0^3 & x_1^4x_0^3 & x_1^3x_0^3 & x_1^2x_0^3 \\
4\alpha+1 & 0 & x_1^4x_0 & x_1^3x_0^2 & x_1^2x_0^3 & x_1x_0^4 & x_0^2 & x_1^4x_0^3 & x_1^3x_0^4 & x_1^2 & x_1x_0 & x_0^4 & x_1^4 & x_1^3x_0 & x_1^2x_0^2 & x_1x_0^3 & x_0 & x_1^4x_0^2 & x_1^3x_0^3 & x_1^2x_0^4 & x_1 & x_0^3 & x_1^4x_0^4 & x_1^3 & x_1^2x_0 & x_1x_0^2 \\
4\alpha+2 & 0 & x_1^4x_0^2 & x_1^3x_0^4 & x_1^2x_0 & x_1x_0^3 & x_1x_0^2 & x_0^4 & x_1^4x_0 & x_1^3x_0^3 & x_1^2 & x_1^2x_0^4 & x_1x_0 & x_0^3 & x_1^4 & x_1^3x_0^2 & x_1^3x_0 & x_1^2x_0^3 & x_1 & x_0^2 & x_1^4x_0^4 & x_1^4x_0^3 & x_1^3 & x_1^2x_0^2 & x_1x_0^4 & x_0 \\
4\alpha+3 & 0 & x_1^4x_0^3 & x_1^3x_0 & x_1^2x_0^4 & x_1x_0^2 & x_1^2x_0^2 & x_1 & x_0^3 & x_1^4x_0 & x_1^3x_0^4 & x_1^4x_0^4 & x_1^3x_0^2 & x_1^2 & x_1x_0^3 & x_0 & x_1x_0 & x_0^4 & x_1^4x_0^2 & x_1^3 & x_1^2x_0^3 & x_1^3x_0^3 & x_1^2x_0 & x_1x_0^4 & x_0^2 & x_1^4 \\
4\alpha+4 & 0 & x_1^4x_0^4 & x_1^3x_0^3 & x_1^2x_0^2 & x_1x_0 & x_1^3x_0^2 & x_1^2x_0 & x_1 & x_0^4 & x_1^4x_0^3 & x_1x_0^4 & x_0^3 & x_1^4x_0^2 & x_1^3x_0 & x_1^2 & x_1^4x_0 & x_1^3 & x_1^2x_0^4 & x_1x_0^3 & x_0^2 & x_1^2x_0^3 & x_1x_0^2 & x_0 & x_1^4 & x_1^3x_0^4 \\
\end{block}
\end{blockarray}
$
}
\caption{The full polynomial parity-check matrix $H(x_0,x_1)$ obtained from starting with the $[5^2,2,24]_{5^2}$ Reed-Solomon code. This means all $25$ values from $\F_{5^2}$ are used as evaluation points, ordered lexicographically as the rows of the matrix. The polynomials in the code with zero constant term are listed in lexicographic ordering as the columns of the matrix.} \label{fig:gf25-matrix}
\end{sidewaysfigure}

\newpage
\bibliographystyle{ieeetr}
\bibliography{IEEEabrv, bib}

\end{document}